\documentclass[letterpaper]{article} % DO NOT CHANGE THIS
\usepackage[preprint]{aaai2027}
\usepackage[hyphens]{url}  % DO NOT CHANGE THIS
\usepackage{graphicx} % DO NOT CHANGE THIS
\usepackage{natbib}  % DO NOT CHANGE THIS AND DO NOT ADD ANY OPTIONS TO IT
\usepackage{caption} % DO NOT CHANGE THIS AND DO NOT ADD ANY OPTIONS TO IT
\usepackage{algorithm}
\usepackage{algorithmic}
\usepackage{amsmath,amssymb,amsthm,mathtools,bm}
\usepackage{booktabs}

\newtheorem{proposition}{Proposition}
\newtheorem{theorem}{Theorem}
\newtheorem{assumption}{Assumption}

\graphicspath{{Figures/}}
\title{GPU-Parallelization of Markov Chain Pool Decoding with Unbiased MCMC}
\author{
Takato Ueno\textsuperscript{\rm 1},
Shuji Kijima\textsuperscript{\rm 2}
}
\affiliations{
\textsuperscript{\rm 1}Graduate School of Data Science, Shiga University\\
\textsuperscript{\rm 2}Faculty of Data Science, Shiga University
}

\begin{document}
\maketitle

\begin{abstract}
Markov chain pool decoding (MCPD) devised by Knill et al. (1996) identifies likely positive clones from noisy pooled-test results. The standard MCPD estimates clone-wise posterior probabilities using Gibbs sampling, but it may allocate excessive computational effort to low-scoring clones. This paper focuses on parallelizing MCPD on GPU architectures. Whereas the standard MCPD employs systematic-scan updates, we propose a score-weighted update scheme that updates high-scoring clones more frequently. We prove that the stationary distribution of the proposed Markov chain coincides with the target posterior distribution. To enable efficient GPU parallelization, we further incorporate the unbiased MCMC framework of Jacob et al. (2020) and employ a slot-refilling technique based on the arguments by Glynn and Heidelberger (1991) about the coupling of Markov chains. Experiments involving 1,298 clones, 97 pools, and three true positives demonstrate improved recovery compared with uniform decoders, while maintaining high overlap under high-noise conditions.\end{abstract}

\section{Introduction}

In a pooling experiment, a large number of clones are assigned to a small number of pools, and positive clones are identified from the observed pool results. Because of pool-level false negatives and false positives, the set of positive clones that is consistent with the observations is not uniquely determined. \citet{BrunoKnillBaldingEtAl1995} constructed random $k$-sets designs and $k$-sets packing designs and applied them to a human chromosome 16 YAC library consisting of 1,298 clones. \citet{KnillSchliepTorney1996} introduced a Bayesian model that accounts for observation errors and estimated the posterior positive probability of each clone by Markov chain pool decoding. These posterior probabilities are used to rank the clones that are to be subjected to a confirmatory assay. Since the Bayesian decoder explicitly conditions on false-positive and false-negative rates, and since pooled testing error can vary across assay and pooling conditions, the finite-budget behavior of MCPD estimators should be examined across observation-noise regimes \citep{KnillSchliepTorney1996,QianRefsniderMooreEtAl2020,TanOmarLeeWong2020,TatsuokaChenLu2023}.\citet[p.~399]{KnillSchliepTorney1996} accumulated statistics over 10,000 steps after 1,000 warmup steps and determined the run length empirically by comparing several runs.

\citet{KnillSchliepTorney1996} estimates clone-wise posterior probabilities but may allocate excessive computation to low-scoring clones, although positive clones are few in library screening.

In this work, we increase the update frequency of high-score clones using score weights fixed from the observations before the chain starts. We then employ unbiased MCMC; the unbiased MCMC method proposed by \citet{JacobOLearyAtchade2020} removes initialization bias by computing a bias-correction term based on a telescoping sum constructed from coupled chains while ordinary MCMC suppresses the influence of the initial state by using a warmup period to bring the chain close to the target distribution, and the required run length increases when mixing is slow\citep{BiswasJacobVanetti2019,AtchadeJacob2024}. We also employ the framework of \citet{glynn1991analysis} for GPU parallelization. This work makes the following three contributions.

\begin{itemize}
\item We propose a parallel computation method for MCPD of \citet{KnillSchliepTorney1996} from posterior-mean estimation by a long MCMC chain into GPU-parallel aggregation of short-chain unbiased estimators.
\item We construct a random-scan Gibbs sampler with score weights that are fixed from the observed data before the chain starts, and we show that its transition matrix leaves the target posterior distribution invariant.
\item We combine common-random-number coupling with a completion-time correction and extend unbiased estimation of the clone-wise posterior positive probability to fixed-budget GPU execution with slot refill.
\end{itemize}

\paragraph{Other Related Work}

In the study of pooling designs, \citet{BarillotLacroixCohen1991} analyzed multidimensional pooling strategies theoretically, and \citet{BaldingTorney1996} treated design conditions that take error detection into account. \citet{BaldingBrunoKnillTorney1996} compared non-adaptive pooling designs. These studies concern the construction and comparison of pooling designs, whereas the present work decodes observations obtained from a fixed design and changes the MCMC estimator used for MCPD. Regarding random-scan Gibbs samplers, \citet{LevineCasella2006} proposed a method that updates the selection probabilities on the basis of past samples, and \citet{LatuszynskiRobertsRosenthal2013} gave convergence conditions for adaptive Gibbs samplers together with examples of non-convergence. In contrast, the score weights used here are fixed from the observed pool outcomes before the chain starts, and the contribution is to prove invariance of the resulting MCPD transition matrix rather than to analyze adaptive scan rules.

\citet{GonzalezLowGrettonGuestrin2011} constructed simultaneous updates of conditionally independent variables by graph coloring and parallel updates of junction-tree blocks. \citet{TereninDongDraper2019} executed the Gibbs update of an exchangeable latent-variable model in a data-parallel manner on a GPU.

\section{Preliminaries}

\subsection{Knill's Method}

Let $z_i\in\{0,1\}$ denote the state of clone $i$, let $y_j\in\{0,1\}$ denote the observation for pool $j$, and let $A\in\{0,1\}^{m\times n}$ be the pooling matrix. The latent state of a pool is defined by $\xi_j(\bm z)=\mathbf 1\{\sum_{i=1}^n A_{ji}z_i>0\}$. With prior positive probability $q$, false-positive probability $\alpha_{\mathrm{fp}}$, and false-negative probability $\beta_{\mathrm{fn}}$, the posterior distribution provided by \citet{KnillSchliepTorney1996} is
\[
\pi(\bm z\mid \bm y)\propto
\prod_{i=1}^n q^{z_i}(1-q)^{1-z_i}
\prod_{j=1}^m \Pr\{y_j\mid \xi_j(\bm z)\}.
\]
The quantity to be estimated for clone $i$ is $p_i=\mathbb E_\pi[Z_i]$.

Let $\mathcal P(i)$ be the set of pools that contain clone $i$. \citet[p.~399]{KnillSchliepTorney1996} defined the naive ranking as
\[
R_i
=\prod_{j\in\mathcal P(i)}
\frac{\Pr\{y_j\mid \xi_j=1\}}
{\Pr\{y_j\mid \xi_j=0\}}.
\]
This ranking is used for pruning: low-ranked clones are fixed to negative, and only the top clones are included in the state of the Markov chain. Let $\mathcal A$ be the set of retained clones and $n_a$ be its cardinality. Fixing the clones in $\mathcal A^c$ to $0$, the posterior distribution after pruning is given by
\[
\begin{aligned}
\pi_{\mathcal A}(\bm z_{\mathcal A}\mid \bm y)
&=\Pr\{\bm Z_{\mathcal A}=\bm z_{\mathcal A}
\mid \bm y,\bm Z_{\mathcal A^c}=\bm 0\}\\
&\propto
\prod_{i\in\mathcal A}q^{z_i}(1-q)^{1-z_i}
\prod_{j=1}^m
\Pr\{y_j\mid \xi_j^{\mathcal A}(\bm z_{\mathcal A})\},
\end{aligned}
\]
where $\xi_j^{\mathcal A}(\bm z_{\mathcal A})=\mathbf 1\{\sum_{i\in\mathcal A}A_{ji}z_i>0\}$. This is the renormalization of the posterior distribution on condition that $\bm Z_{\mathcal A^c}=\bm 0$. This work takes this pruned posterior as the estimation target. In what follows we relabel the clones in $\mathcal A$ as $1,\ldots,n_a$ and abbreviate $\pi_{\mathcal A}$ as $\pi$.

Let the state space be $\Omega=\{0,1\}^{n_a}$. Write the full conditional of clone $i$ as $p_i(b\mid \bm x_{-i})=\Pr_\pi\{Z_i=b\mid \bm Z_{-i}=\bm x_{-i}\}$ for $b\in\{0,1\}$, and define the single-site update map by
\[
[\Phi_i(\bm x,u)]_r
=
\begin{cases}
\mathbf 1\{u\le p_i(1\mid \bm x_{-i})\}, & r=i,\\
x_r, & r\ne i.
\end{cases}
\]
With $U\sim\operatorname{Unif}(0,1)$, we let $P_i(\bm x,\bm x')=\Pr\{\Phi_i(\bm x,U)=\bm x'\}$ be the single-site Gibbs transition matrix. One sweep of the systematic scan is $Q=P_1P_2\cdots P_{n_a}$.

For random updates, we count one single-site update as one Markov transition and use
\[
P_{\mathrm{unif}}=\frac1{n_a}\sum_{i=1}^{n_a}P_i,
\qquad
P_w=\sum_{i=1}^{n_a} w_iP_i .
\]
The weights satisfy $w_i>0$ and $\sum_iw_i=1$. Because $I_t$ is generated at each step from one and the same distribution, independently of the current state and of the past, we have, for every $t$,
\[
\Pr\{\bm X_{t+1}=\bm x'\mid \bm X_t=\bm x\}
=\sum_{i=1}^{n_a}\Pr\{I_t=i\}P_i(\bm x,\bm x'),
\]
and the right-hand side does not depend on the time $t$. The right-hand side equals $P_{\mathrm{unif}}$ for the uniform random update and $P_w$ for the score-weighted update. \citet{LiuWongKong1995,LiuWongKong1994} distinguished systematic scan from random scan and treated random scan as a mixture of single-site transition matrices. The MCPD of \citet[p.~398]{KnillSchliepTorney1996} uses systematic scan, which updates all clones in a fixed order. We compare this implementation with random update, score-weighted update, coupled random update, and coupled score-weighted update.

The quantity $\widehat\pi_{b,T}(h)=\frac{1}{T}\sum_{t=b}^{b+T-1}h(\bm X_t)$ is the ordinary finite MCMC average, and its expectation $\mathbb E[\widehat\pi_{b,T}(h)]=\frac{1}{T}\sum_{t=b}^{b+T-1}\nu P^t(h)$ depends on the initial distribution $\nu$. This dependence is the source of the initialization bias of a finite-length average.The random-update Gibbs sampler is shown in Algorithm~\ref{alg:random-update}.

\subsection{Unbiased MCMC}

Unbiased MCMC couples, through common random numbers, two Markov chains whose marginal transitions are governed by the same transition matrix, and adds the differences before meeting to a finite-length average as a telescoping correction. Under a faithful coupling that maintains $\bm X_t=\bm Y_{t-1}$ after meeting, $H_{k:\ell}(h)$ is an unbiased estimator of $\pi(h)$. \citet{JacobOLearyAtchade2020} denote the corresponding construction by $H_{k:m}$ and call it a time-averaged estimator. We write $\ell$ in place of their $m$ and call $k:\ell$ the averaging window. \citet{JacobOLearyAtchade2020} established the marginal convergence, the moment bound, the meeting-time tail, and the post-meeting agreement that this construction requires. \citet{BiswasJacobVanetti2019} and \citet{AtchadeJacob2024} organize the construction and the diagnostics of couplings.

The computation of a replication terminates once both the averaging-window end $\ell$ and the meeting time $\tau_1$ have been reached, so the completion cost is random. \citet{WangBlanchetGlynn2024} compared unbiased and biased estimators in terms of total computation and completion time, and analyzed conditions under which the former shortens the completion time in a massively parallel environment.

\section{Proposed Method}

\subsection{Score-Weighted Update Gibbs Sampler}

Let the log score of the naive ranking used for pruning be $\eta_i=\log\{q/(1-q)\}+\log R_i$. The first term is common to all clones, so $\eta_i$ and $R_i$ induce the same ranking.

Let $d_j^{\mathcal A}=\sum_{i\in\mathcal A}A_{ji}$ be the size of pool $j$ restricted to the active clones, and define
\[
a_j
=\log\Pr\{y_j\mid \xi_j=1\}
-\log\Pr\{y_j\mid \xi_j=0\},
\]
together with $s_i=\sum_{j:A_{ji}=1}d_j^{\mathcal A}a_j$. The quantity $s_i$ is a fixed score computed only once, before the chain starts, from the observed pool outcomes and the pool sizes after pruning. We set $\gamma=100n_a$, $s_{\max}=\max_i s_i$, and $s_{\min}=\min_i s_i$; we let $\tau_w=\log\gamma/(s_{\max}-s_{\min})$ when $s_{\max}>s_{\min}$ and $\tau_w=0$ when $s_{\max}=s_{\min}$; and we use
\[
w_i=(1-\alpha_w)
\frac{\exp(\tau_ws_i)}{\sum_{j=1}^{n_a}\exp(\tau_ws_j)}
+\frac{\alpha_w}{n_a},
\qquad \alpha_w=0.5 .
\]
This corresponds to $\epsilon=\{\alpha_w/(1-\alpha_w)\}\sum_j\exp(\tau_ws_j)/n_a$. The uniform component makes the weight of every site positive. The Score-weighted update Gibbs sampler is shown in Algorithm~\ref{alg:score-weighted-update}.

\begin{algorithm}[t]
\caption{Random update Gibbs sampler}
\label{alg:random-update}
\begin{algorithmic}[1]
\REQUIRE Initial distribution $\nu$, burn-in $b$, sampling length $T$, clone index $j$
\ENSURE Finite-length average $\widehat\pi_{b,T}^{\mathrm{RU}}(h_j)$ for clone $j$
\STATE Draw $\bm X_0\sim\nu$.
\FOR{$t=0,\ldots,b+T-2$}
\STATE Draw $I_t\sim\operatorname{Unif}\{1,\ldots,n_a\}$ and $U_t\sim\operatorname{Unif}(0,1)$.
\STATE Set $\bm X_{t+1}=\Phi_{I_t}(\bm X_t,U_t)$.
\ENDFOR
\STATE Set $h_j(\bm z)=z_j$ and return $\widehat\pi_{b,T}^{\mathrm{RU}}(h_j)=\frac{1}{T}\sum_{t=b}^{b+T-1}h_j(\bm X_t)$.
\end{algorithmic}
\end{algorithm}

\begin{algorithm}[t]
\caption{Score-weighted update Gibbs sampler}
\label{alg:score-weighted-update}
\begin{algorithmic}[1]
\REQUIRE Initial distribution $\nu$, burn-in $b$, sampling length $T$, fixed weights $w$, clone index $j$
\ENSURE Finite-length average $\widehat\pi_{b,T}^{\mathrm{SW}}(h_j)$ for clone $j$
\STATE Draw $\bm X_0\sim\nu$.
\FOR{$t=0,\ldots,b+T-2$}
\STATE Draw $I_t$ with $\Pr\{I_t=i\}=w_i$ and draw $U_t\sim\operatorname{Unif}(0,1)$.
\STATE Set $\bm X_{t+1}=\Phi_{I_t}(\bm X_t,U_t)$.
\ENDFOR
\STATE Set $h_j(\bm z)=z_j$ and return $\widehat\pi_{b,T}^{\mathrm{SW}}(h_j)=\frac{1}{T}\sum_{t=b}^{b+T-1}h_j(\bm X_t)$.
\end{algorithmic}
\end{algorithm}

\subsection{Coupled Score-Weighted Update Gibbs Sampler}

We construct an unbiased estimator from the score-weighted update Gibbs sampler and a coupling. One single-site update is counted as one Markov transition. We draw $\bm Z_0\sim\nu$ and set $\bm X_0=\bm Y_0=\bm Z_0$. First, only $\bm X$ is updated once, which produces $\bm X_1$. At each later time $t\ge1$, a site $I_t$ and a uniform variable $U_t$ shared by the two chains are generated, and both chains are updated once by
\[
\bm X_{t+1}=\Phi_{I_t}(\bm X_t,U_t),
\qquad
\bm Y_t=\Phi_{I_t}(\bm Y_{t-1},U_t),
\]
where $\Pr\{I_t=i\}=w_i$ and $U_t\sim\operatorname{Unif}(0,1)$. The marginal transition matrix of each chain coincides with $P_w$. The meeting time is defined by $\tau_1=\inf\{t\ge1:\bm X_t=\bm Y_{t-1}\}$.  The Coupled Score-weighted update Gibbs sampler is shown in Algorithm~\ref{alg:coupled-score-weighted}.

\begin{algorithm}[t]
\caption{Coupled score-weighted update Gibbs sampler}
\label{alg:coupled-score-weighted}
\begin{algorithmic}[1]
\REQUIRE Initial distribution $\nu$, fixed weights $w$, averaging-window end $\ell$
\ENSURE Coupled paths $\{\bm X_t\}$, $\{\bm Y_t\}$, meeting time $\tau_1$
\STATE Draw $\bm Z_0\sim\nu$ and set $\bm X_0=\bm Y_0=\bm Z_0$.
\STATE Draw $I_0$ with $\Pr\{I_0=i\}=w_i$ and draw $U_0\sim\mathrm{Unif}(0,1)$.
\STATE Set $\bm X_1=\Phi_{I_0}(\bm X_0,U_0)$.
\STATE Set $t\leftarrow1$ and record whether $\bm X_t=\bm Y_{t-1}$.
\WHILE{$t<\ell$ or $\bm X_t\ne\bm Y_{t-1}$}
\STATE Draw $I_t$ with $\Pr\{I_t=i\}=w_i$ and draw $U_t\sim\mathrm{Unif}(0,1)$.
\STATE Set $\bm X_{t+1}=\Phi_{I_t}(\bm X_t,U_t)$ and $\bm Y_t=\Phi_{I_t}(\bm Y_{t-1},U_t)$.
\STATE After meeting, keep using the same $I_t,U_t$ so that $\bm X_{t+1}=\bm Y_t$ is preserved.
\STATE Set $t\leftarrow t+1$ and record the first $t$ at which $\bm X_t=\bm Y_{t-1}$.
\ENDWHILE
\STATE Return the coupled paths and $\tau_1$.
\end{algorithmic}
\end{algorithm}

\subsection{Coupled Estimator and Fixed-Budget GPU Execution}

With $h_j(\bm z)=z_j$, define
\[
H_r(h_j)
=h_j(\bm X_r)+\sum_{s=1}^{\infty}
\{h_j(\bm X_{r+s})-h_j(\bm Y_{r+s-1})\}
\]
and $H_{k:\ell}(h_j)=\frac{1}{\ell-k+1}\sum_{r=k}^{\ell}H_r(h_j)$, where $k:\ell$ is the averaging window.

The GPU scheduler maps a logical slot $p$ to one persistent CUDA thread. Each slot holds the two chain states, the random-number state, the elapsed progress units, the number of completions, and the local sums of the estimator. In the execution treated by Theorem~\ref{thm:ct}, replication $r$ is advanced until both the averaging-window end and the meeting have been reached. Writing $\tau_{p,r}$ for the meeting time, the progress-unit completion cost of the one-lag estimator is $C_{p,r}=\max\{\ell,\tau_{p,r}\}$. At that point $\widehat\theta_{p,r}$ is finalized, and the slot updates its local sums and its number of completions. If $S_p(r)<B$, the state is initialized and the next replication starts. We call this operation slot refill.

\citet[Section~3, Equation~(3.2), Proposition~3.2]{glynn1991analysis} aggregate, with equal weights across processors, the local averages of the replications completed by each processor up to a fixed horizon. In the present work the processor index, the replication output, the runtime, and the time horizon correspond to $p$, $\widehat\theta_{p,r}$, $C_{p,r}$, and $B$, respectively. Here $B$ is the progress-unit horizon up to which the start of a new replication is permitted.

Fix a clone $j$ and set $\widehat\theta_{p,r}=H_{k:\ell}^{(p,r)}(h_j)$. For replication $r$ of logical slot $p$, define
\[
\begin{gathered}
S_p(0)=0,
\qquad
S_p(r)=\sum_{u=1}^rC_{p,u},\\
N_p(B)=\max\{r\ge0:S_p(r)\le B\}.
\end{gathered}
\]
Here $N_p(B)$ is the number of completions within the horizon. Only when $N_p(B)=0$ do we retain, as a zero-completion correction, the first replication that completes beyond $B$. Accordingly, we define the slot mean $A_p(B)$ by
\[
\begin{gathered}
N_p^*(B)=\max\{1,N_p(B)\},\\
A_p(B)
=\frac1{N_p^*(B)}
\sum_{r=1}^{N_p^*(B)}\widehat\theta_{p,r},
\end{gathered}
\]
and define $\widehat\theta_{\mathrm{CT}}(h_j)=\frac{1}{P}\sum_{p=1}^PA_p(B)$. The aggregation method is shown in Algorithm~\ref{alg:unbiased-aggregation}.

\begin{algorithm}[t]
\caption{Fixed-budget GPU execution and completion-time-corrected aggregation}
\label{alg:unbiased-aggregation}
\begin{algorithmic}[1]
\REQUIRE $P$ logical slots fixed before the outcomes are observed, per-slot horizon $B$, averaging window $k:\ell$, clone index $j$
\ENSURE Aggregated estimator $\widehat\theta_{\mathrm{CT}}(h_j)$
\FOR{each logical slot $p$}
\STATE Run i.i.d.\ coupled pairs sequentially with Algorithm~\ref{alg:coupled-score-weighted}.
\STATE Compute $\widehat\theta_{p,r}=H_{k:\ell}^{(p,r)}(h_j)$ from each completed pair.
\STATE After pair $r$ completes, refill the same slot with the next pair if $S_p(r)<B$.
\IF{at least one pair completes by $B$}
\STATE Retain all replications with $S_p(r)\le B$ and discard the second and later replications that complete beyond $B$.
\ELSE
\STATE Run the first replication through to meeting and retain $H_{k:\ell}^{(p,1)}(h_j)$.
\ENDIF
\STATE Compute the slot mean $A_p(B)$.
\ENDFOR
\STATE Return $\widehat\theta_{\mathrm{CT}}(h_j)=\frac{1}{P}\sum_{p=1}^PA_p(B)$.
\end{algorithmic}
\end{algorithm}

\section{Theory}

We establish, in this order, the validity of the fixed-weight transition matrix, the unbiasedness of the single-pair coupled estimator, and the unbiasedness of the fixed-budget parallel aggregation. The proofs are given in the supplementary material.

\begin{assumption}[Finite positive posterior and fixed weights]
\label{ass:finite}
The state space is finite and $\pi(\bm z)>0$ for every state. The weights are fixed before the chain starts, satisfy $w_i>0$ and $\sum_iw_i=1$, and depend neither on the state nor on the time.
\end{assumption}

\citet[p.~398]{KnillSchliepTorney1996} also assumed that the pool outcome likelihood takes neither the value $0$ nor the value $1$, and derived the irreducibility and aperiodicity of MCPD from this condition. In the binary observation model used here, $0<q<1$, $0<\alpha_{\mathrm{fp}}<1$, and $0<\beta_{\mathrm{fn}}<1$ guarantee $\pi(\bm z)>0$ at every state after pruning. The condition on fixed and positive weights is the condition that the present work adds for the score-weighted random scan.

\begin{proposition}[Fixed-weight random-scan Gibbs transition matrix]
\label{prop:transition-matrix}
Under Assumption~\ref{ass:finite}, $\pi$ is the invariant distribution of $P_w$, and $P_w$ is irreducible and aperiodic. Consequently, $\nu P_w^t\to\pi$ for every initial distribution $\nu$.
\end{proposition}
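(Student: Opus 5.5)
We will prove the three claims (invariance, irreducibility and aperiodicity, convergence) in that order. Each follows from the positivity in Assumption~\ref{ass:finite} together with the fact that $P_w$ is a fixed convex combination of the single-site kernels $P_i$.

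\textbf{Invariance.} First we show that each $P_i$ is reversible with respect to $\pi$. For $\bm x,\bm x'$ that differ at most in coordinate $i$, we have $P_i(\bm x,\bm x')=p_i(x'_i\mid\bm x_{-i})=\pi(\bm x')/\sum_{b}\pi(\bm x_{-i},b)$. Otherwise $P_i(\bm x,\bm x')=0$. This identity holds because the map $\Phi_i$ sets coordinate $i$ to $1$ with probability $p_i(1\mid\bm x_{-i})$ and leaves the other coordinates unchanged. It follows that $\pi(\bm x)P_i(\bm x,\bm x')=\pi(\bm x)\pi(\bm x')/\sum_b\pi(\bm x_{-i},b)$. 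This expression is symmetric in $\bm x$ and $\bm x'$, because $\bm x_{-i}=\bm x'_{-i}$. Detailed balance then gives $\pi P_i=\pi$. Since the weights $w_i$ do not depend on the state or the time, linearity gives $\pi P_w=\sum_i w_i\,\pi P_i=\pi$. In fact $P_w$ is itself $\pi$-reversible. This is the only place where the fixedness of $w$ is used: a state-dependent $w$ would break the linearity step.

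\textbf{Irreducibility and aperiodicity.} Since $\pi>0$, every full conditional satisfies $0<p_i(b\mid\bm x_{-i})<1$. Take two states $\bm x,\bm x'\in\Omega$ that differ in the coordinates $i_1,\dots,i_d$. Flip these coordinates one at a time. Each flip has probability at least $w_{i_s}\,p_{i_s}(x'_{i_s}\mid\cdot)>0$ under $P_w$, so $P_w^d(\bm x,\bm x')>0$. For aperiodicity, note that $P_w(\bm x,\bm x)\ge w_1 p_1(x_1\mid\bm x_{-1})>0$ for every $\bm x$. A positive holding probability at a state of an irreducible chain makes the chain aperiodic. If one prefers a uniform bound, irreducibility and positive self-loops together give $P_w^{n_a}>0$ entrywise, because any path shorter than $n_a$ steps can be padded with self-loops. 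So $P_w$ is primitive.

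\textbf{Convergence.} We apply the standard convergence theorem for finite, irreducible, aperiodic Markov chains, or equivalently Perron--Frobenius for the primitive matrix $P_w$. It gives a unique stationary distribution, which must be $\pi$ by the invariance step, and $\nu P_w^t\to\pi$ for every $\nu$. The argument contains no real obstacle. The only point requiring care is the explicit verification that the update map $\Phi_i$ induces exactly the Gibbs kernel $P_i$. We check this first, since every later step relies on the formula for $P_i$.
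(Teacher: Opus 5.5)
Your proof is correct and has the same structure as the paper's: invariance of each $P_i$ followed by linearity in the fixed weights, an explicit positive-probability path for irreducibility, a positive self-loop for aperiodicity, and the finite-state convergence theorem. The differences are cosmetic. You get $\pi P_i=\pi$ from detailed balance, which also shows that $P_w$ is $\pi$-reversible, whereas the paper computes $(\pi P_i)(\bm y)$ directly by summing over $b\in\{0,1\}$. Your irreducibility path has length equal to the Hamming distance, whereas the paper sweeps all $n$ coordinates in order.
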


\begin{theorem}[Unbiasedness of the coupled estimator, and finiteness of its variance and expected computation time]
\label{thm:cmcpd}
Under Assumption~\ref{ass:finite}, generate a coupled pair with Algorithm~\ref{alg:coupled-score-weighted} and run it through to meeting. For every integer $0\le k\le\ell$ and every clone $j$, the following hold.
\begin{enumerate}
\item[(a)] Each marginal transition matrix is $P_w$, and $\mathbb E[h_j(\bm X_t)]\to\pi(h_j)$.
\item[(b)] For every $\eta>0$, $\sup_{t\ge0}\mathbb E[|h_j(\bm X_t)|^{2+\eta}]\le1$.
\item[(c)] There exist $c<\infty$ and $\rho\in(0,1)$ such that $\Pr(\tau_1>t)\le c\rho^t$.
\item[(d)] $\bm X_t=\bm Y_{t-1}$ holds almost surely for every $t\ge\tau_1$.
\end{enumerate}
Consequently, $\mathbb E[H_{k:\ell}(h_j)]=\pi(h_j)=\Pr_\pi(Z_j=1)$, and the variance and the expected computation time of $H_{k:\ell}(h_j)$ are finite.
\end{theorem}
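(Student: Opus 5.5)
Our plan is to verify the four hypotheses (a)--(d) directly and then invoke the unbiasedness theorem of \citet{JacobOLearyAtchade2020} for the time-averaged estimator $H_{k:\ell}$, whose assumptions are exactly marginal convergence, a uniform $(2+\eta)$-moment bound, a geometric meeting-time tail, and faithfulness after meeting.

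For (a), we note that $I_t$ and $U_t$ are drawn independently of the past with $\Pr\{I_t=i\}=w_i$, so each chain marginally has transition matrix $\sum_i w_iP_i=P_w$. Both $\bm X$ and the lagged chain $\bm Y$ start from $\nu$. Proposition~\ref{prop:transition-matrix} then gives $\nu P_w^t\to\pi$; since $h_j$ is bounded on the finite $\Omega$, this yields $\mathbb E[h_j(\bm X_t)]\to\pi(h_j)$. Part (b) is immediate because $h_j(\bm z)=z_j\in\{0,1\}$. For (d), the update map is deterministic given $(I_t,U_t)$: if $\bm X_t=\bm Y_{t-1}$, then $\bm X_{t+1}=\Phi_{I_t}(\bm X_t,U_t)=\Phi_{I_t}(\bm Y_{t-1},U_t)=\bm Y_t$, and induction completes the argument.

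The main work is (c). We will exhibit a fixed number of steps $L=n_a$ and $\delta>0$ such that, from any pair of states, the coupled pair meets within $L$ steps with probability at least $\delta$. Concretely, consider the event that $I_t,\dots,I_{t+n_a-1}$ are the sites $1,\dots,n_a$ in order (probability $\prod_iw_i>0$ by Assumption~\ref{ass:finite}), and that each shared $U$ falls below $\min_{\bm x}p_i(1\mid\bm x_{-i})$. This minimum is positive because $\pi>0$ on the finite $\Omega$, so each conditional probability lies in $(0,1)$, and $\epsilon_0=\min_{i,\bm x}\min\{p_i(1\mid\bm x_{-i}),p_i(0\mid\bm x_{-i})\}>0$. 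On this event, both chains set every coordinate to $1$, so they coincide after the sweep, with probability at least $\delta=\epsilon_0^{n_a}\prod_iw_i$. Meeting at coordinates before time $t+L$ is also fine, since by (d) the chains stay together. Splitting time into disjoint blocks of length $L$ and using independence of the innovations across blocks gives $\Pr(\tau_1>1+mL)\le(1-\delta)^m$, which is of the form $c\rho^t$ with $\rho=(1-\delta)^{1/L}$. One subtlety to handle carefully is the lag: the pair compared at time $t$ is $(\bm X_t,\bm Y_{t-1})$, and both members are updated with the same $(I_t,U_t)$, so the block argument applies to this lagged pair exactly as stated.

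Given (a)--(d), the sum defining $H_r$ is a.s.\ finite, since its terms vanish for $r+s\ge\tau_1$. The dominated-convergence and telescoping argument of \citet{JacobOLearyAtchade2020} (their Proposition~1) gives $\mathbb E[H_r(h_j)]=\lim_t\mathbb E[h_j(\bm X_t)]=\pi(h_j)$, and hence finite second moments. Averaging over $r\in\{k,\dots,\ell\}$ preserves both properties. We expect the variance bound to reduce simply: $|H_r|\le1+\max(0,\tau_1-r)$ by boundedness of $h_j$, and $\tau_1$ has all moments by (c). The expected computation time is $\mathbb E\max\{\ell,\tau_1\}\le\ell+\mathbb E\tau_1<\infty$, again by (c). Finally, $\pi(h_j)=\Pr_\pi(Z_j=1)$ by definition of $h_j$.
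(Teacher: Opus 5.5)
Your proposal is correct and follows essentially the same route as the paper: verify (a)--(d), prove (c) with a block of $n_a$ coupled updates that selects coordinates $1,\ldots,n_a$ in order and drives both members of the lagged pair to a common constant configuration, then use telescoping and dominated convergence for unbiasedness, finite variance and finite cost. The differences are cosmetic. You force both chains to the all-ones state with the state-independent threshold $U\le\min_{\bm x}p_i(1\mid\bm x_{-i})$, which makes the block events functions of the innovations alone and hence independent across blocks, whereas the paper forces them to $0^n$ via $U>\max\{p_i(1\mid\bm x_{-i}),p_i(1\mid\bm x'_{-i})\}$ and conditions block by block; and your pathwise bound $|H_r|\le 1+\max(0,\tau_1-r)$ is slightly tighter than the paper's $1+2\tau_1$.
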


The fixed-completion-time estimator of \citet[Equation~(3.2), Proposition~3.2]{glynn1991analysis} preserves the single-replication expectation when the outputs and the runtimes are jointly i.i.d. In the present work we apply this result to the single-pair estimator $\widehat\theta_{p,r}$.

\begin{assumption}[Parallel replication]
\label{ass:parallel}
The observed data, the set of clones to be analyzed, the initial distribution $\nu$, the fixed weights $w$, and the averaging window $k:\ell$ are fixed. The logical slots $1,\ldots,P$ are fixed before the outcomes are observed. Each pair $(p,r)$ is assigned an i.i.d.\ ideal random stream, and each replication is run through to meeting with the same initial distribution and the same transition kernel. Then $(\widehat\theta_{p,r},C_{p,r})$ is i.i.d.\ within a slot and across slots. Dependence between the estimator and the computation time within the same pair is allowed.
\end{assumption}

Assumption~\ref{ass:parallel} corresponds to the jointly i.i.d.\ setting for outputs and runtimes of \citet[Section~2]{glynn1991analysis}.

\begin{theorem}[Equal-weight logical-slot aggregate]
\label{thm:ct}
Under Assumption~\ref{ass:finite}, Assumption~\ref{ass:parallel}, and Algorithm~\ref{alg:unbiased-aggregation}, we have $\mathbb E[\widehat\theta_{\mathrm{CT}}(h_j)]=\pi(h_j)$ for every clone $j$.
\end{theorem}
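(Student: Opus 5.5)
Since $\widehat\theta_{\mathrm{CT}}(h_j)=\frac1P\sum_{p}A_p(B)$ and the slots are identically distributed under Assumption~\ref{ass:parallel}, linearity reduces the claim to $\mathbb E[A_p(B)]=\pi(h_j)$ for one fixed slot $p$. By Theorem~\ref{thm:cmcpd}, $\mu:=\mathbb E[\widehat\theta_{p,r}]=\pi(h_j)$. The same theorem gives a finite variance, and $C_{p,r}=\max\{\ell,\tau_{p,r}\}$ has finite mean because of the geometric tail in Theorem~\ref{thm:cmcpd}(c). Moreover $C_{p,r}\ge\ell\ge1$, so the renewal process $S_p(r)$ has strictly positive increments and $N_p(B)\le B/\ell$ is bounded. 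Hence all expectations below are finite sums, and no integrability issue arises. What remains is to show that the random-count average $A_p(B)$, including the zero-completion correction, keeps mean $\mu$. This is the content of \citet[Proposition~3.2]{glynn1991analysis}; I would reprove it directly rather than only cite it.

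The plan is to decompose on the events $\{N_p(B)=n\}$ and use exchangeability. Write $\theta_r=\widehat\theta_{p,r}$ and $C_r=C_{p,r}$. Then
\[
\mathbb E[A_p(B)]=\mathbb E\bigl[\theta_1\mathbf 1\{N_p(B)=0\}\bigr]+\sum_{n\ge1}\frac1n\sum_{r=1}^n\mathbb E\bigl[\theta_r\mathbf 1\{N_p(B)=n\}\bigr].
\]
The event $\{N_p(B)=n\}=\{C_1+\cdots+C_n\le B<C_1+\cdots+C_{n+1}\}$ is symmetric in $(C_1,\dots,C_n)$. Because the pairs $(\theta_r,C_r)$ are i.i.d., the vector $((\theta_1,C_1),\dots,(\theta_n,C_n),C_{n+1})$ is invariant in law under permutations of its first $n$ coordinates. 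It follows that $\mathbb E[\theta_r\mathbf 1\{N=n\}]=\mathbb E[\theta_1\mathbf 1\{N=n\}]$ for $r\le n$. The inner average therefore collapses, and, including the $n=0$ term, we get
\[
\mathbb E[A_p(B)]=\sum_{n\ge0}\mathbb E\bigl[\theta_1\mathbf 1\{N_p(B)=n\}\bigr]=\mathbb E[\theta_1\mathbf 1\{N_p(B)\ge0\}]=\mathbb E[\theta_1]=\mu .
\]
Exchanging the sum and the expectation is justified because $N_p(B)$ is bounded and $\theta_1$ is integrable.

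The zero-completion case is the point I expect to need the most care, because $\theta_1$ is correlated with $C_1$ and hence with $\{N=0\}=\{C_1>B\}$. The argument handles it without any independence: $\theta_1\mathbf 1\{N=0\}$ appears with weight exactly one. Its mean is therefore whatever it is, and it combines with the $n\ge1$ terms into $\mathbb E[\theta_1]$ by the partition $\bigcup_n\{N=n\}$. This is exactly why the algorithm must keep the first replication rather than drop the slot, and why it must discard later replications finishing beyond $B$. I would also check against Algorithm~\ref{alg:unbiased-aggregation} that $A_p(B)$ is computed only from replications with $S_p(r)\le B$ when $N\ge1$, which matches the definition of $N_p^*(B)$.

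Finally, $\theta_1$ must be well defined even when $C_1>B$. Theorem~\ref{thm:cmcpd}(c) gives $\tau_1<\infty$ almost surely, so running the first replication through to meeting terminates, and the infinite sum in $H_r$ is finite by Theorem~\ref{thm:cmcpd}(d).
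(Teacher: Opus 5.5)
Your proof is correct, and its skeleton matches the paper's. Both take $\mathbb E[\widehat\theta_{p,r}]=\pi(h_j)$ and integrability from Theorem~\ref{thm:cmcpd}. Both use the i.i.d.\ pairs $(\widehat\theta_{p,r},C_{p,r})$ with positive runtimes from Assumption~\ref{ass:parallel}. Both rest on \citet[Proposition~3.2]{glynn1991analysis}.

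The difference is that the paper's proof stops at that citation. It only records the notational correspondence, $A_p(B)=\overline X_i\{\widetilde N_i(t)\}$ and $\widehat\theta_{\mathrm{CT}}(h_j)=\widetilde\mu_1(P,t)$, and asserts that the hypotheses hold. You instead reprove the proposition by partitioning on $\{N_p(B)=n\}$ and using that this event is symmetric in $C_1,\dots,C_n$. Hence $\mathbb E[\theta_r\mathbf 1\{N_p(B)=n\}]$ does not depend on $r\le n$.

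The citation route is shorter but leaves the matching of settings to the reader. Your route is self-contained and shows explicitly why within-pair dependence between output and runtime is harmless. It also shows why the first replication must be kept when $N_p(B)=0$, and why later replications that overrun $B$ must be dropped. Finally, it isolates what the identity actually uses: i.i.d.\ pairs, integrable outputs, and a bounded $N_p(B)$.

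One small repair: $0\le k\le\ell$ permits $\ell=0$, so do not write $C_{p,r}\ge\ell\ge1$. Use $C_{p,r}\ge\tau_{p,r}\ge1$ instead, which holds because $\tau_1$ is an infimum over $t\ge1$. This gives $N_p(B)\le B$ deterministically, which is slightly sharper than the paper's ``$N_p(B)<\infty$ almost surely''.
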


Theorem~\ref{thm:ct} is a direct application of \citet[Equation~(3.2), Proposition~3.2]{glynn1991analysis} to the single-pair estimator obtained in Theorem~\ref{thm:cmcpd}. The supplementary material gives the correspondence between the assumptions and the notation. The implementation computes $\widehat{\bm\theta}_{p,r}=(H_{k:\ell}^{(p,r)}(h_1),\ldots,H_{k:\ell}^{(p,r)}(h_{n_a}))^\mathsf{T}$ from a common coupled pair and aggregates it as
\[
\widehat{\bm p}_{\mathrm{CT}}
=
\frac1P\sum_{p=1}^P
\frac1{N_p^*(B)}
\sum_{r=1}^{N_p^*(B)}
\widehat{\bm\theta}_{p,r}.
\]
Because all components use the common $C_{p,r}$ and $N_p(B)$, applying the theorem to each component gives $\mathbb E[\widehat{\bm p}_{\mathrm{CT}}]=(\pi(h_1),\ldots,\pi(h_{n_a}))^\mathsf{T}$.

The global pooled mean, which collects the completed replications of all slots directly, is
\[
\widehat\theta_{\mathrm{pool}}
=
\frac{\sum_{p=1}^{P}\sum_{r=1}^{N_p(B)}
\widehat\theta_{p,r}}
{\sum_{p=1}^{P}N_p(B)} .
\]
This aggregation uses runtime-dependent slot weights that are proportional to the completion counts. The expectation-preserving identity of Theorem~\ref{thm:ct} holds for $\widehat\theta_{\mathrm{CT}}(h_j)$, which aggregates the within-slot averages with equal weights. Our implementation uses $\widehat\theta_{\mathrm{CT}}(h_j)$.

\section{Experiments}

\subsection{Experimental Setup and Metrics}

\subsubsection{Synthetic data.}
For the synthetic-data experiments we used a Knill-type random pooling design with 1,298 clones, 97 pools, and 3 true positives. Each clone belongs to 3 pools. Following \citet{KnillSchliepTorney1996}, we set $q=2.6/1298=0.002003$, so the unpruned prior expected number of positive clones was $1298q=2.6$. After pruning by the naive score, 333 clones were analyzed. Setting the false-negative rate and the false-positive rate to either 0.05 or 0.10 produced four noise conditions. The observed pool data were fixed for each noise condition, and each cell was repeated with 10 chain seeds.

Varying FN and FP while holding the pooling design and latent positive clones fixed isolates the effect of observation ambiguity from changes in the underlying screening instance. These conditions evaluate whether the relative accuracy of the update schemes, including the coupling correction, depends on the noise level. The experiment uses the same FN/FP values for data generation and posterior evaluation; it therefore evaluates finite-budget approximation of correctly specified posteriors rather than robustness to likelihood misspecification. Fixing the observed pool data within each noise condition further makes the variation across executions attributable to the Monte Carlo procedures and chain seeds.

We compared five implementations: systematic scan, random update, coupled random update, score-weighted update, and coupled score-weighted update.

\subsubsection{Tapestry real data.}
As public real data we used Tapestry 320 and Tapestry 961 reported by \citet{ChakravarthyEtAl2020Tapestry}. Tapestry 320 consists of 320 samples, 48 pools, and 5 positives; each sample corresponds to 3 pools and each pool to 20 samples. Tapestry 961 consists of 961 samples, 93 pools, and 10 positives; each sample corresponds to 3 pools and each pool to 31 samples. For Tapestry 320 and Tapestry 961, we set $q=5/320=0.015625$ and $q=10/961=0.010406$, respectively, so the unpruned prior means matched the reported positive counts. We retained the same clone-level prior probabilities after pruning.

The published pooling matrices and observed pool outcomes reduce the discretion involved in choosing a pooling design; the FN/FP likelihood conversion and pruning remain analysis choices. We converted these matrices and outcomes into the binary MCPD likelihood. For Tapestry 320 we set FN and FP to 0.01. For Tapestry 961, one of the 23 true-positive pools was not detected, so we set FN to $1/23=0.0435$ and FP to 0.01. For Tapestry 320 we retained all 320 clones, and for Tapestry 961 we retained the top 247 clones by naive score.

Let $n_a$ denote the number of analyzed clones and $s\in\{1,3,5,10,20\}$ the total estimator budget in sweeps. We set $M_s=s n_a$ single-site updates and $b_s=\operatorname{round}(0.1M_s)$. The uncoupled estimators discarded the first $b_s$ updates and averaged the following $M_s-b_s$ updates. The coupled estimators used $k=b_s+1$ and $\ell=M_s$, giving $\ell-k+1=M_s-b_s$ terms in the base average. The per-slot progress-unit horizon was fixed at 20 sweeps in every cell. The uncoupled methods used 2,000,000 logical chain slots and processed approximately $N_s=2{,}000{,}000\times20/s$ chains in batches. To match the number of chain states held simultaneously, the coupled methods used 1,000,000 logical slots and advanced one coupled pair in each slot. When a pair completed within the 20-sweep horizon, slot refill started another pair, and partial pairs were excluded. In a slot with zero completions, the first pair continued through to meeting, and the slot means were aggregated with equal weights. The 20-sweep meeting cap makes the reported coupled estimates finite-cap approximations to the uncapped unbiased estimators in Theorems~\ref{thm:cmcpd} and \ref{thm:ct}. For the reference vector we used systematic scan with 2,000,000 chains, 1,000 warmup sweeps, and 10,000 sampling sweeps.

\subsubsection{Evaluation Metrics}

Let $\mathcal A$ be the set of analyzed clones. The all-clone MAE is evaluated by
\[
\operatorname{MAE}_{\mathcal A}
=\frac1{|\mathcal A|}
\sum_{i\in\mathcal A}
|\widehat p_i-p_i^{\mathrm{ref}}| .
\]
For ranking accuracy, we compared the 10 clones with the largest reference posterior probabilities and the 10 clones selected by each method. The top-10 overlap is the fraction of the 10 reference top-ranked clones that also appear in the method-selected top-10 set. The top-10 MAE is the average absolute difference between the estimated and reference posterior probabilities over the 10 reference top-ranked clones. Thus, top-10 overlap measures candidate-set agreement, whereas top-10 MAE measures the error of the posterior magnitude assigned to the top clones of the reference.

The estimator-length figures report means and standard errors over 10 executions. The focused top-10 table and the Tapestry table report the 1-sweep mean and standard error over 10 executions. The wall-clock table first averages the 5 estimator lengths within each execution and then reports the mean and standard error over 10 executions.

Tables and plots covering all evaluated noise conditions, estimator lengths, and real-data datasets are provided in the supplementary material.

\subsection{Focused Top-10 Comparison under FN/FP Noise}
Table~\ref{tab:focused-top10} reports top-10 MAE and top-10 overlap for the five methods with 1 sweep under the symmetric noise settings FN $=$ FP $=0.05$ and FN $=$ FP $=0.10$. The results show that coupling correction reduces top-10 MAE and that score-weighted update improves top-10 MAE accuracy relative to random update.

In the case of FN $=$ FP $= 0.10$, coupled score-weighted update attained the smallest top-10 MAE, $6.427 \times 10^{-4}$, followed by coupled random update at $1.318 \times 10^{-3}$, whereas random update showed the largest top-10 MAE, $3.853 \times 10^{-2}$. Both coupled methods attained smaller top-10 MAE than the three uncoupled methods. However, we note that random update attained the largest top-10 overlap, $0.990$.

In both cases, FN $=$ FP $= 0.05$ and FN $=$ FP $= 0.10$, random update attained the largest top-10 overlap and the largest top-10 MAE. The large top-10 MAE indicates that accurate identification of top-ranked clones does not guarantee accurate posterior magnitudes. Coupled score-weighted update reduced the top-10 MAE more than score-weighted update, with a 94.0\% improvement at FN $=$ FP $= 0.10$ compared with a 78.2\% improvement at FN $=$ FP $= 0.05$. This difference supports the benefit of correcting initialization bias from a finite averaging window in the stronger-noise instance evaluated here.

\begin{table}[t]
\centering
\caption{FN/FP focused top-10 comparison at 1 sweep.}
\label{tab:focused-top10}
\scriptsize
\setlength{\tabcolsep}{1pt}
\begin{tabular}{lcc}
\toprule
Method & Top-10 MAE & Top-10 overlap \\
\midrule
\multicolumn{3}{c}{FN $=0.05$, FP $=0.05$} \\
\midrule
systematic scan & $(2.419{\pm}0.000){\times}10^{-2}$ & $0.400{\pm}0.000$ \\
random update & $(1.272{\pm}0.000){\times}10^{-1}$ & $\mathbf{0.900{\pm}0.000}$ \\
coupled random update & $(3.449{\pm}0.485){\times}10^{-2}$ & $0.820{\pm}0.020$ \\
score-weighted update & $(1.236{\pm}0.000){\times}10^{-2}$ & $0.700{\pm}0.000$ \\
coupled score-weighted update & $\mathbf{(2.695{\pm}0.240){\times}10^{-3}}$ & $0.850{\pm}0.027$ \\
\midrule
\multicolumn{3}{c}{FN $=0.10$, FP $=0.10$} \\
\midrule
systematic scan & $(4.624{\pm}0.002){\times}10^{-3}$ & $0.400{\pm}0.000$ \\
random update & $(3.853{\pm}0.000){\times}10^{-2}$ & $\mathbf{0.990{\pm}0.010}$ \\
coupled random update & $(1.318{\pm}0.208){\times}10^{-3}$ & $0.780{\pm}0.020$ \\
score-weighted update & $(1.071{\pm}0.000){\times}10^{-2}$ & $0.200{\pm}0.000$ \\
coupled score-weighted update & $\mathbf{(6.427{\pm}0.711){\times}10^{-4}}$ & $0.890{\pm}0.018$ \\
\bottomrule
\end{tabular}

\end{table}

\subsection{MAE by Estimator Length at the Maximum Noise Condition}

Next, Figure~\ref{fig:max-noise-mae} plots the all-clone MAE and top-10 MAE of the five methods after 1, 3, 5, 10, and 20 sweeps when FN $=$ FP $= 0.10$. The coupled methods show smaller MAE when the estimator length is short, whereas the MAE is larger when the estimator length is long; this comes from that uncoupled averages retain more initialization bias.

The upper panel refers to the 333 clones after pruning and the lower panel to the reference top-10 clones; the vertical axis is logarithmic. For all-clone MAE, coupled score-weighted update was the most accurate of the five methods at 1 and 3 sweeps, while systematic scan was the most accurate at 5, 10, and 20 sweeps. Coupled score-weighted update attained the smallest top-10 MAE at 1, 3, and 5 sweeps, with values of $6.427\times10^{-4}$, $7.451\times10^{-4}$, and $5.757\times10^{-4}$, respectively. Systematic scan attained the smallest values at 10 and 20 sweeps.

The advantage of coupled score-weighted update persisted through 5 sweeps for top-10 MAE and through 3 sweeps for all-clone MAE, supporting the claimed benefit at short estimator lengths. At longer estimator lengths, systematic scan approached the long-run systematic-scan reference. Longer averages leave less initialization bias for coupling correction to remove, which reduces the relative advantage of the coupled methods.

\begin{figure*}[t]
\centering
\includegraphics[width=0.86\linewidth]{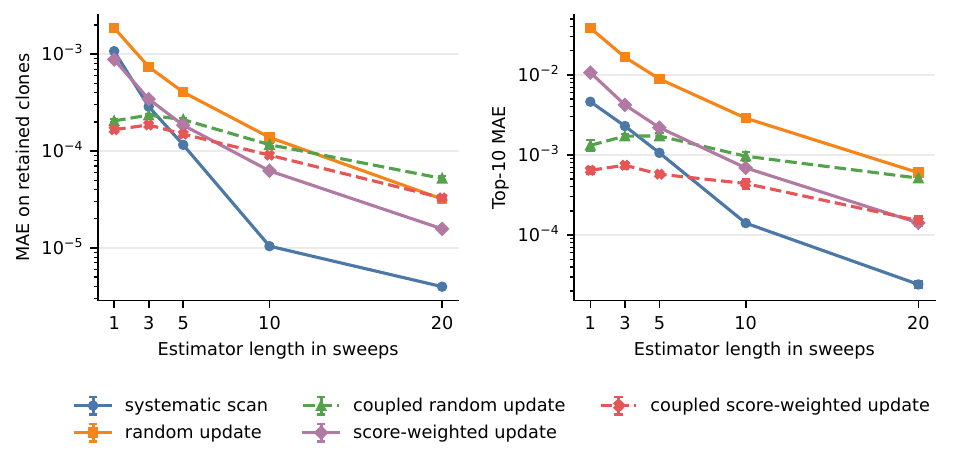}
\caption{MAE by estimator length at FN $=$ FP $=0.10$.}
\label{fig:max-noise-mae}
\end{figure*}

\subsection{Execution Characteristics at the Maximum Noise Condition}

\subsubsection{Slot refill ablation.}
Figure~\ref{fig:refill-ablation} compares coupled score-weighted update with and without slot refill. The vertical axis shows the all-clone MAE for the 333 clones after pruning on a logarithmic scale. At an estimator length of 1 sweep, slot refill reduced the all-clone MAE by 70.7\%, from $5.675\times10^{-4}$ to $1.662\times10^{-4}$, and produced lower MAE than no refill at lengths from 1 to 10 sweeps. Refill executed additional pairs in slots that became free within the 20-sweep horizon and increased the number of aggregated estimators. At 20 sweeps, no progress units remained for starting another pair.

\begin{figure}[t]
\centering
\includegraphics[width=0.78\linewidth]{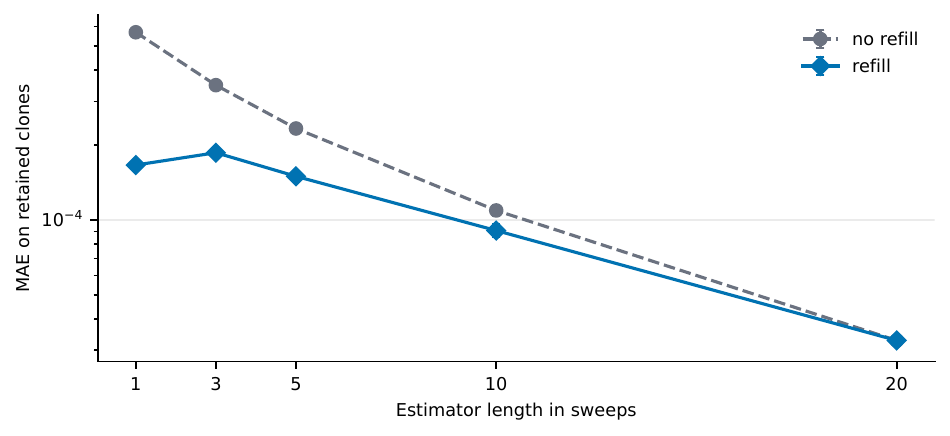}
\caption{Slot refill ablation.}
\label{fig:refill-ablation}
\end{figure}

\subsubsection{Meeting time.}
Table~\ref{tab:meeting-time-summary} shows the mean of the maximum meeting time for coupled random update and coupled score-weighted update. The mean maximum meeting time decreased from 14.156 sweeps to 12.917 sweeps.

\begin{table}[t]
\centering
\caption{Meeting-time summary.}
\label{tab:meeting-time-summary}
\scriptsize
\setlength{\tabcolsep}{2pt}
\begin{tabular}{lr}
\toprule
Method & Mean maximum \\
\midrule
coupled random update & 14.156 \\
coupled score-weighted update & 12.917 \\
\bottomrule
\end{tabular}

\end{table}

\subsubsection{Wall-clock.}
Table~\ref{tab:wall-clock-main} shows the GPU wall-clock at FN 0.10 and FP 0.10. The reference systematic scan is the long-chain execution used to define $p_i^{\mathrm{ref}}$. Coupled score-weighted update was the fastest evaluated method, running 4.8\% faster than systematic scan and 14.9\% faster than score-weighted update.

\begin{table}[t]
\centering
\caption{GPU wall-clock.}
\label{tab:wall-clock-main}
\scriptsize
\begin{tabular}{lcc}
\toprule
Method & Mean wall-clock & Standard error \\
\midrule
reference systematic scan & 836.8894 & 0.0000 \\
systematic scan & 2.7454 & 0.0008 \\
random update & 2.7509 & 0.0003 \\
score-weighted update & 3.0711 & 0.0001 \\
\textbf{coupled score-weighted update} & \textbf{2.6131} & \textbf{0.0003} \\
\bottomrule
\end{tabular}

\end{table}

\subsection{Top-10 Comparison on Tapestry Real Data}

The Tapestry experiments test two claims about performance on public pooling designs. Score weighting should improve the coupled estimator across designs, while the effect of coupling on the score-weighted estimator may vary between datasets. Table~\ref{tab:tapestry-real-data} reports the 1-sweep top-10 MAE and top-10 overlap against the long-run systematic-scan reference.

On Tapestry 320, coupled score-weighted update obtained the smallest top-10 MAE among the five methods, 0.0373. This value was 78.1\% smaller than that of score-weighted update and 90.1\% smaller than that of coupled random update. Random update and score-weighted update obtained the largest top-10 overlap, 0.900. On Tapestry 961, coupled score-weighted update improved the top-10 overlap of coupled random update from 0.550 to 0.690 and reduced its top-10 MAE by 36.9\%, from 0.7321 to 0.4623. However, score-weighted update attained the best values on both metrics, with a top-10 overlap of 0.800 and a top-10 MAE of 0.2305.

\begin{table}[t]
\centering
\caption{Tapestry top-10 comparison at 1 sweep.}
\label{tab:tapestry-real-data}
\scriptsize
\setlength{\tabcolsep}{2pt}
\begin{tabular}{lcc}
\toprule
Method & Top-10 MAE & Top-10 overlap \\
\midrule
\multicolumn{3}{c}{Tapestry 320} \\
\midrule
systematic scan & \(0.4576{\pm}0.0000\) & \(0.300{\pm}0.000\) \\
random update & \(0.4163{\pm}0.0000\) & \(\mathbf{0.900{\pm}0.000}\) \\
coupled random update & \(0.3767{\pm}0.0522\) & \(0.690{\pm}0.038\) \\
score-weighted update & \(0.1700{\pm}0.0000\) & \(\mathbf{0.900{\pm}0.000}\) \\
coupled score-weighted update & \(\mathbf{0.0373{\pm}0.0068}\) & \(0.870{\pm}0.021\) \\
\midrule
\multicolumn{3}{c}{Tapestry 961} \\
\midrule
systematic scan & \(0.6539{\pm}0.0000\) & \(0.100{\pm}0.000\) \\
random update & \(0.5152{\pm}0.0000\) & \(0.700{\pm}0.000\) \\
coupled random update & \(0.7321{\pm}0.0648\) & \(0.550{\pm}0.060\) \\
score-weighted update & \(\mathbf{0.2305{\pm}0.0000}\) & \(\mathbf{0.800{\pm}0.000}\) \\
coupled score-weighted update & \(0.4623{\pm}0.0308\) & \(0.690{\pm}0.060\) \\
\bottomrule
\end{tabular}

\end{table}

\section{Conclusion}
We propose a parallel computation method for MCPD as fixed-budget GPU aggregation of short-chain unbiased estimators. The proposed score-weighted random scan uses weights computed once from the observed pool outcomes and restricted pool sizes. Because each single-site Gibbs kernel $P_i$ leaves the pruned posterior invariant, so does $P_w=\sum_i w_iP_i$. We coupled two $P_w$ chains through common site indices and uniform variates and applied the time-averaged estimator $H_{k:\ell}$ to remove initialization bias. Persistent logical slots refill completed pairs within a fixed horizon; equal weighting of slot means preserves the single-pair expectation under jointly i.i.d.\ outputs and runtimes.

With 1,298 clones, 97 pools, and three true positives, coupled score-weighted update attained the smallest top-10 MAE at short estimator lengths under the strongest synthetic noise. Tapestry confirmed the benefit of score weighting across both designs, although score-weighted update outperformed its coupled counterpart on Tapestry 961. Further improvements will require an MCPD-specific coupling and careful tuning of the lag and burn-in.

\section{Acknowledgments}
This work is partially supported by JSPS KAKENHI Grant Numbers\\
JP23K21645 and JP26K02866.

\clearpage
\raggedbottom
\bibliography{references}

\clearpage
\appendix
\section*{Supplementary Material}
\setcounter{theorem}{0}
\setcounter{proposition}{0}
\setcounter{assumption}{0}
\setcounter{table}{0}
\setcounter{figure}{0}
\renewcommand{\thetheorem}{S\arabic{theorem}}
\renewcommand{\theproposition}{S\arabic{proposition}}
\renewcommand{\theassumption}{S\arabic{assumption}}
\renewcommand{\thetable}{S\arabic{table}}
\renewcommand{\thefigure}{S\arabic{figure}}
\section{Purpose and Structure of the Supplementary Material}

This supplement presents, in order, the derivations and arguments for the fixed-weight coupled estimator and the random-completion-aware GPU aggregation that were omitted from the main text. The assumption, the proposition, and the two theorems restate those of the main text and are numbered in the same order.

\section{Fixed-Weight Random-Scan Gibbs Transition Matrix}

Let the state space be $\Omega=\{0,1\}^n$ and write $\pi$ for the pool-decoding posterior distribution. A transition matrix $P$ on $\Omega$ satisfies
\[
P(\bm x,\bm y)\ge0,
\qquad
\sum_{\bm y\in\Omega}P(\bm x,\bm y)=1 .
\]
Representing a probability distribution $\mu$ as a row vector, the distribution after one transition is $\mu P$. If $\pi P=\pi$, then $\pi$ is the invariant distribution of $P$.

Write the full conditional of coordinate $i$ as $p_i(b\mid\bm x_{-i})=\Pr_\pi\{Z_i=b\mid\bm Z_{-i}=\bm x_{-i}\}$. Let $P_i$ be the single-site Gibbs transition matrix that updates coordinate $i$ from this full conditional and leaves the other coordinates fixed, that is,
\[
P_i(\bm x,\bm y)
=
\mathbf 1\{\bm x_{-i}=\bm y_{-i}\}
p_i(y_i\mid\bm x_{-i}).
\]
For fixed weights $w_1,\ldots,w_n$ we define $P_w=\sum_{i=1}^nw_iP_i$.

\begin{assumption}[Finite positive posterior and fixed weights]
\label{supp:ass:finite}
The state space is finite and $\pi(\bm x)>0$ for every $\bm x\in\Omega$. The weights are fixed before the chain starts, satisfy
\[
w_i>0,
\qquad
\sum_{i=1}^nw_i=1,
\]
and depend neither on the state nor on the time.
\end{assumption}

Assumption~\ref{supp:ass:finite} implies $\pi(\bm x_{-i})>0$ for every $\bm x\in\Omega$ and every $i$, so the full conditionals are well defined. Moreover,
\[
\sum_{\bm y\in\Omega}P_i(\bm x,\bm y)
=\sum_{b\in\{0,1\}}p_i(b\mid\bm x_{-i})
=1,
\]
\[
\sum_{\bm y\in\Omega}P_w(\bm x,\bm y)
=\sum_{i=1}^nw_i
\sum_{\bm y\in\Omega}P_i(\bm x,\bm y)
=1,
\]
so both $P_i$ and $P_w$ are transition matrices.

Proposition~\ref{supp:prop:transition-matrix} states that the marginal transition matrix of the fixed-weight random scan leaves the posterior distribution invariant and converges to it from every initial distribution.

\begin{proposition}[Fixed-weight random-scan Gibbs transition matrix]
\label{supp:prop:transition-matrix}
Under Assumption~\ref{supp:ass:finite}, $\pi$ is the invariant distribution of $P_w$, and $P_w$ is irreducible and aperiodic. Consequently, for every initial distribution $\nu$,
\[
\nu P_w^t\longrightarrow\pi .
\]
\end{proposition}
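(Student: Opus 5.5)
The plan has three parts: show that $\pi$ is invariant for each $P_i$, then show that $P_w$ is irreducible and aperiodic, and finally invoke the convergence theorem for finite Markov chains.

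\emph{Invariance.} For fixed $i$ and $\bm y\in\Omega$ we compute
\[
(\pi P_i)(\bm y)=\sum_{\bm x:\bm x_{-i}=\bm y_{-i}}\pi(\bm x)\,p_i(y_i\mid\bm y_{-i})=\pi(\bm y_{-i})\,p_i(y_i\mid\bm y_{-i})=\pi(\bm y).
\]
The first equality uses $\bm x_{-i}=\bm y_{-i}$, so the full conditional depends only on $\bm y_{-i}$. The second equality sums over the two values of $x_i$, which gives the marginal $\pi(\bm y_{-i})>0$. Linearity and $\sum_iw_i=1$ then give $\pi P_w=\sum_iw_i\,\pi P_i=\pi$. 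Since $\pi$ will turn out to be the unique stationary law of an irreducible chain, this identifies it as \emph{the} invariant distribution.

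\emph{Irreducibility and aperiodicity.} Assumption~\ref{supp:ass:finite} gives $\pi(\bm x)>0$ for every state. Hence every full conditional satisfies $p_i(b\mid\bm x_{-i})=\pi(\bm x^{(i\leftarrow b)})/\pi(\bm x_{-i})>0$ for both $b\in\{0,1\}$, where $\bm x^{(i\leftarrow b)}$ denotes $\bm x$ with coordinate $i$ set to $b$.

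For irreducibility, take any $\bm x,\bm y\in\Omega$. We walk from $\bm x$ to $\bm y$ by flipping, one at a time, the coordinates where they differ. Each flip at coordinate $i$ has $P_w$-probability at least $w_i\,p_i(y_i\mid\cdot)>0$, because $w_i>0$. So $P_w^{d}(\bm x,\bm y)>0$, where $d$ is the Hamming distance.

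For aperiodicity, note that $P_w(\bm x,\bm x)\ge w_1p_1(x_1\mid\bm x_{-1})>0$. Every state therefore has a self-loop and period $1$.

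\emph{Convergence.} A finite irreducible aperiodic chain has a unique stationary distribution, and $\nu P_w^t\to\pi$ for every initial $\nu$. The standard argument is that some power $P_w^{N}$ has all entries positive; one can take $N=n$, padding shorter paths with self-loops. By the Perron--Frobenius theorem, or equivalently the Doeblin contraction, this gives geometric convergence in total variation.

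No step is a real obstacle. The one point needing care is that positivity of the full conditionals comes from $\pi>0$ together with $w_i>0$. This is exactly where the strict positivity of the weights, guaranteed by the uniform mixing component $\alpha_w/n_a$, is used.
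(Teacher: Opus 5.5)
Your proof is correct and follows essentially the same route as the paper: the conditional-summation computation gives $\pi P_i=\pi$ and hence $\pi P_w=\pi$ by linearity, irreducibility follows from a positive-probability path of single-site updates, and aperiodicity follows from a positive self-loop. The differences are cosmetic. You flip only the coordinates where the two states differ, whereas the paper sweeps coordinates $1,\ldots,n$ in order. For the final convergence step you sketch the standard argument (primitivity of $P_w^n$ plus a Doeblin bound), whereas the paper simply cites the finite-state convergence theorem of Levin--Peres.
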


\begin{proof}
Fix a coordinate $i$ and a state $\bm y\in\Omega$. The only states $\bm x$ that contribute to the sum $(\pi P_i)(\bm y)$ are those with $\bm x_{-i}=\bm y_{-i}$. Such states can be written as $(\bm y_{-i},b)$ with $b\in\{0,1\}$. Hence
\begin{align*}
(\pi P_i)(\bm y)
&=\sum_{\bm x\in\Omega}\pi(\bm x)P_i(\bm x,\bm y)\\
&=\sum_{b\in\{0,1\}}
\pi(\bm y_{-i},b)p_i(y_i\mid\bm y_{-i})\\
&=p_i(y_i\mid\bm y_{-i})
\sum_{b\in\{0,1\}}\pi(\bm y_{-i},b)\\
&=p_i(y_i\mid\bm y_{-i})\pi(\bm y_{-i})\\
&=\pi(\bm y).
\end{align*}
Therefore $\pi P_i=\pi$ for every $i$. By the linearity of the matrix product,
\begin{align*}
\pi P_w
&=\pi\left(\sum_{i=1}^nw_iP_i\right)
=\sum_{i=1}^nw_i(\pi P_i)\\
&=\sum_{i=1}^nw_i\pi
=\pi\sum_{i=1}^nw_i
=\pi .
\end{align*}

Next we show irreducibility. Fix arbitrary $\bm x,\bm x'\in\Omega$. Consider the path that, in $n$ consecutive updates, selects the coordinates $1,2,\ldots,n$ in this order and updates coordinate $i$ to $x'_i$. The probability $w_i$ of selecting each coordinate is positive. By Assumption~\ref{supp:ass:finite}, both $0$ and $1$ have positive probability under each binary full conditional. Hence this prescribed path has positive probability, and $\bm x'$ is reachable from $\bm x$. Therefore $P_w$ is irreducible.

At any state $\bm x$, the event that one coordinate is selected and its current value is drawn again has positive probability. Therefore $P_w(\bm x,\bm x)>0$. Since a finite irreducible Markov chain with a positive self-transition probability has period one, $P_w$ is aperiodic. The convergence theorem for finite-state Markov chains gives the uniqueness of $\pi$ and total-variation convergence from every initial distribution \citep[Theorem~4.9]{LevinPeres2017}. Concretely,
\[
\left\|\nu P_w^t-\pi\right\|_{\mathrm{TV}}
\le
\sum_{\bm x\in\Omega}\nu(\bm x)
\left\|P_w^t(\bm x,\cdot)-\pi\right\|_{\mathrm{TV}}
\longrightarrow0 .
\]
\end{proof}

\section{Lag Coupling}

Write $\Phi_i(\bm x,u)$ for the result of applying to the state $\bm x$ the inverse-transform Gibbs update that uses coordinate $i$ and the uniform variable $u$. Draw the initial state $\bm Z_0\sim\nu$. Generate an i.i.d.\ sequence $\{(I_t,U_t)\}_{t\ge0}$ independent of $\bm Z_0$ in which, for each $t$, $I_t$ and $U_t$ are independent and
\[
\Pr(I_t=i)=w_i,
\qquad
U_t\sim\operatorname{Unif}(0,1).
\]
Set the initial states to $\bm X_0=\bm Y_0=\bm Z_0$ and define the pre-lag update by $\bm X_1=\Phi_{I_0}(\bm X_0,U_0)$. For each $t\ge1$, update both chains with the same $(I_t,U_t)$ by
\[
\bm X_{t+1}=\Phi_{I_t}(\bm X_t,U_t),
\qquad
\bm Y_t=\Phi_{I_t}(\bm Y_{t-1},U_t).
\]
The meeting time is defined by
\[
\tau_1
=\inf\{t\ge1:\bm X_t=\bm Y_{t-1}\}.
\]

\section{Coupled Estimator and Parallel Aggregation}

For clone $j$, define $h_j(\bm x)=x_j$. Then
\[
\pi(h_j)
=\sum_{\bm x\in\Omega}x_j\pi(\bm x)
=\Pr_\pi(Z_j=1).
\]
For an integer $r\ge0$, define
\[
H_r(h_j)
=h_j(\bm X_r)
+\sum_{s=1}^{\infty}
\{h_j(\bm X_{r+s})-h_j(\bm Y_{r+s-1})\},
\]
and for $0\le k\le\ell$, define
\[
H_{k:\ell}(h_j)
=\frac1{\ell-k+1}
\sum_{r=k}^{\ell}H_r(h_j).
\]
\citet{JacobOLearyAtchade2020} call the corresponding $H_{k:m}$ construction a time-averaged estimator. We write $\ell$ in place of their $m$ and call $k:\ell$ the averaging window.

\section{Unbiasedness of the Coupled Estimator and Finiteness of Its Variance and Expected Computation Time}

Theorem~\ref{supp:thm:cmcpd} verifies the four conditions required for coupling-based unbiased MCMC and applies them directly to the posterior marginal probability of each clone. Condition (a) corresponds to marginal validity, condition (b) to the moment bound, condition (c) to the geometric tail of the meeting time, and condition (d) to faithfulness. Computation time is measured in units of one single-site update.

\begin{theorem}[Unbiasedness of the coupled estimator, and finiteness of its variance and expected computation time]
\label{supp:thm:cmcpd}
Under Assumption~\ref{supp:ass:finite}, run a coupled pair generated by the fixed-weight random scan until the averaging-window end $\ell$ has been reached and the lagged pair has met. For every integer $0\le k\le\ell$ and every clone $j$, the following hold.
\begin{enumerate}
\item[(a)] Each marginal transition matrix is $P_w$, and
\[
\mathbb E[h_j(\bm X_t)]\longrightarrow\pi(h_j).
\]
\item[(b)] For every $\eta>0$,
\[
\sup_{t\ge0}
\mathbb E[|h_j(\bm X_t)|^{2+\eta}]
\le1 .
\]
\item[(c)] There exist $c<\infty$ and $\rho\in(0,1)$ such that
\[
\Pr(\tau_1>t)\le c\rho^t .
\]
\item[(d)] For every $t\ge\tau_1$,
\[
\bm X_t=\bm Y_{t-1}
\qquad\text{almost surely}.
\]
\end{enumerate}
Consequently,
\[
\mathbb E[H_{k:\ell}(h_j)]
=\pi(h_j)
=\Pr_\pi(Z_j=1),
\]
and the variance and the expected computation time of $H_{k:\ell}(h_j)$ are finite.
\end{theorem}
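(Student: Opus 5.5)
Our plan is to verify conditions (a)--(d) one at a time and then invoke the unbiasedness theorem of \citet{JacobOLearyAtchade2020} for $H_{k:\ell}$.

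\textbf{Condition (a).} Both chains are driven by the same i.i.d.\ sequence $(I_t,U_t)$, drawn independently of the past. Conditionally on $\bm X_t=\bm x$, the law of $\bm X_{t+1}$ is therefore $\sum_i w_iP_i(\bm x,\cdot)=P_w(\bm x,\cdot)$, and the same holds for $\bm Y$. Proposition~\ref{supp:prop:transition-matrix} then gives $\nu P_w^t\to\pi$. Since $h_j$ is bounded on a finite space, this yields $\mathbb E[h_j(\bm X_t)]\to\pi(h_j)$.

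\textbf{Condition (b).} This is immediate, because $h_j\in\{0,1\}$.

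\textbf{Condition (d).} Both chains apply the same deterministic map $\Phi_{I_t}(\cdot,U_t)$ to their states. Hence if $\bm X_t=\bm Y_{t-1}$, then $\bm X_{t+1}=\bm Y_t$, and induction from $\tau_1$ gives the claim.

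\textbf{Condition (c).} This is the main obstacle, and it comes down to a uniform minorization of the coupled kernel. We use the sweep event. Let $\delta=\min_{i,\bm x}\min\{p_i(0\mid\bm x_{-i}),p_i(1\mid\bm x_{-i})\}$, which is positive by Assumption~\ref{supp:ass:finite}.
\begin{itemize}
\item Consider $n$ consecutive shared steps with $I_{t+s}=s+1$ for $s=0,\ldots,n-1$ and every $U_{t+s}\le\delta$.
\item On this event each coordinate is set to $1$ in both chains, since $U\le\delta\le p_i(1\mid\cdot)$ regardless of the other coordinates.
\item After the block both chains therefore equal $\bm 1$, so $\bm X_{t+n}=\bm Y_{t+n-1}$.
\item The event has probability at least $\varepsilon=\delta^n\prod_iw_i>0$, uniformly in the current pair of states.
\end{itemize}
Divide time after step $1$ into disjoint blocks of length $n$. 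These blocks use independent randomness, so $\Pr(\tau_1>1+mn)\le(1-\varepsilon)^m$. This gives $\Pr(\tau_1>t)\le c\rho^t$ with $\rho=(1-\varepsilon)^{1/n}$ and a suitable $c$.

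\textbf{Consequences.} By (c), $\mathbb E[\tau_1^p]<\infty$ for all $p$. Given (a)--(d), the estimator $H_{k:\ell}$ is unbiased with finite variance by the result of \citet{JacobOLearyAtchade2020}. We can also check this directly:
\begin{itemize}
\item For each $r$, the series defining $H_r$ has at most $\max(\tau_1-r,0)$ nonzero terms, each bounded by $1$. So $|H_r|\le1+\tau_1$, which is square integrable.
\item Dominated convergence then lets us exchange expectation and sum. The sum telescopes to $\lim_t\mathbb E[h_j(\bm X_t)]=\pi(h_j)$, using $\bm Y\overset{d}{=}\bm X$ shifted by one step.
\item Averaging over $r\in\{k,\ldots,\ell\}$ preserves unbiasedness, and by Minkowski's inequality also preserves finite variance.
\end{itemize}
Finally, the computation cost is $\max(\ell,\tau_1)$ single-site updates for each of the two chains. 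Its expectation is finite by (c).
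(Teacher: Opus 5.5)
Your proof is correct and follows essentially the same route as the paper. You verify (a)--(d) separately. You get the geometric tail in (c) from an $n$-step block that selects coordinates $1,\dots,n$ in order while a common uniform forces both chains to the same value, with probability $\delta^n\prod_i w_i$. You then obtain unbiasedness by truncation, telescoping and dominated convergence, using an integrable bound linear in $\tau_1$.

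The differences are minor. You force every coordinate to $1$ via the state-independent event $\{U\le\delta\}$; the paper forces coordinates to $0$ via $U>\max\{p_i(1\mid\bm x_{-i}),p_i(1\mid\bm x'_{-i})\}$ and conditions on the past. Your choice makes the block events independent outright. You also use the slightly sharper domination $1+\tau_1$ in place of the paper's $1+2\tau_1$.
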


\begin{proof}
\par\smallskip\noindent\textbf{Condition (a): marginal validity.}
Let $\bm x,\bm y$ be the current states of the lagged pair. Conditionally on $I_t=i$, inverse-transform sampling makes the marginal transition of the $\bm X$ chain $P_i(\bm x,\cdot)$ and that of the $\bm Y$ chain $P_i(\bm y,\cdot)$. Averaging over coordinates, the marginal transition of the $\bm X$ chain is
\[
\sum_{i=1}^nw_iP_i(\bm x,\cdot)
=P_w(\bm x,\cdot),
\]
and the same computation gives $P_w(\bm y,\cdot)$ for the $\bm Y$ chain.

Since $\bm X_0$ and $\bm Y_0$ have the same distribution and the marginal transitions of both chains are $P_w$, induction gives
\[
\mathcal L(\bm X_t)
=\mathcal L(\bm Y_t)
=\nu P_w^t
\qquad
\text{for every }t\ge0 .
\]
Setting $A_j=\{\bm x\in\Omega:x_j=1\}$, we have $h_j=\mathbf 1_{A_j}$. Because Proposition~\ref{supp:prop:transition-matrix} gives $\nu P_w^t\to\pi$ in total variation,
\begin{align*}
\left|
\mathbb E[h_j(\bm X_t)]-\pi(h_j)
\right|
&=
\left|
\bigl(\nu P_w^t\bigr)(A_j)-\pi(A_j)
\right|\\
&\le
\lVert\nu P_w^t-\pi\rVert_{\mathrm{TV}}
\longrightarrow0 .
\end{align*}
This proves condition (a).

\par\smallskip\noindent\textbf{Condition (b): moment bound.}
For every state $\bm x$ we have $h_j(\bm x)\in\{0,1\}$. Hence, for every $\eta>0$,
\[
|h_j(\bm x)|^{2+\eta}\le1 .
\]
Taking expectations and then the supremum over $t$ gives condition (b).

\par\smallskip\noindent\textbf{Condition (c): geometric tail of the meeting time.}
By Assumption~\ref{supp:ass:finite} and the finiteness of $\Omega$,
\[
\delta
=\min_{\bm x\in\Omega}
\min_{1\le i\le n}
\min\{p_i(1\mid\bm x_{-i}),1-p_i(1\mid\bm x_{-i})\}
>0 .
\]

Suppose the lagged pair just before an update is
\[
(\bm X_t,\bm Y_{t-1})=(\bm x,\bm x'),
\]
and that the common update selects coordinate $i$. If the common uniform variable satisfies
\[
U_t>\max\{p_i(1\mid\bm x_{-i}),p_i(1\mid\bm x'_{-i})\},
\]
then the updated coordinate of both chains becomes $0$. The probability of this event is
\begin{align*}
&1-\max\{p_i(1\mid\bm x_{-i}),p_i(1\mid\bm x'_{-i})\}\\
&\quad=\min\{1-p_i(1\mid\bm x_{-i}),1-p_i(1\mid\bm x'_{-i})\}\\
&\quad\ge\delta .
\end{align*}

Next, take $n$ consecutive coupled updates as one block. Let $E$ be the event that the common coordinate selection picks the coordinates $1,2,\ldots,n$ in this order and that each common uniform variable sets the updated coordinate of both chains to $0$. Within the same block, a coordinate already updated to $0$ is not selected again. Hence, on $E$, both states of the lagged pair equal $0^n$ at the end of the block. Conditionally on the state at the start of the block, the probability of $E$ is uniformly at least
\[
q_{\mathrm{meet}}
=\left(\prod_{i=1}^nw_i\right)\delta^n
>0 .
\]
Here $q_{\mathrm{meet}}$ is a uniform lower bound on the probability of meeting along the particular successful path given by the event $E$. Since other coordinate orders and common updates to $1$ also produce meeting, the within-block meeting probability is at least $q_{\mathrm{meet}}$.

Even conditionally on no meeting in all previous blocks, the probability of meeting in the next block is at least $q_{\mathrm{meet}}$. Repeated use of conditional probabilities together with the Markov property gives
\[
\Pr(\tau_1>1+rn)
\le(1-q_{\mathrm{meet}})^r
\quad
\text{for every integer }r\ge0 .
\]
Set $\rho=(1-q_{\mathrm{meet}})^{1/n}\in(0,1)$. Choosing a finite constant $c$ that covers the at most $n$ time points between block boundaries,
\[
\Pr(\tau_1>t)
\le c\rho^t
\qquad
\text{for every }t\ge0 .
\]

The geometric tail makes all positive moments finite. Using the tail-sum bound for an integer $m\ge1$,
\begin{align*}
\mathbb E[\tau_1^m]
&\le
\sum_{t=0}^{\infty}
\{(t+1)^m-t^m\}
\Pr(\tau_1>t)\\
&\le
c\sum_{t=0}^{\infty}
\{(t+1)^m-t^m\}\rho^t
<\infty .
\end{align*}
Non-integer positive orders are bounded by moments of a larger integer order. This proves condition (c).

\par\smallskip\noindent\textbf{Condition (d): faithfulness.}
Suppose $\bm X_t=\bm Y_{t-1}$. Then the two full conditional probabilities at the selected coordinate coincide. Applying the same uniform variable yields the same Bernoulli outcome. The coordinates that are not selected already agree before the update, so
\[
\bm X_{t+1}=\bm Y_t
\qquad
\text{almost surely}.
\]
Applying this implication inductively from the first meeting time gives
\[
\bm X_t=\bm Y_{t-1}
\quad
\text{for every }t\ge\tau_1
\quad\text{almost surely}.
\]

\par\smallskip\noindent\textbf{Unbiasedness.}
For $R\ge1$, define the quantity obtained by truncating the correction sum at $R$ terms,
\[
H_r^{(R)}(h_j)
=h_j(\bm X_r)
+\sum_{s=1}^{R}
\{h_j(\bm X_{r+s})-h_j(\bm Y_{r+s-1})\}.
\]
By condition (a), the marginal distributions at the same time coincide, so
\[
\mathbb E[h_j(\bm Y_u)]
=\mathbb E[h_j(\bm X_u)]
\qquad
\text{for every }u\ge0 .
\]
Substituting this identity,
\begin{align*}
\mathbb E[H_r^{(R)}(h_j)]
&=\mathbb E[h_j(\bm X_r)]\\
&\quad+
\sum_{s=1}^{R}
\{\mathbb E[h_j(\bm X_{r+s})]
-\mathbb E[h_j(\bm X_{r+s-1})]\}\\
&=\mathbb E[h_j(\bm X_{r+R})],
\end{align*}
where the intermediate expectations cancel between adjacent terms. By condition (a),
\[
\lim_{R\to\infty}
\mathbb E[H_r^{(R)}(h_j)]
=\pi(h_j).
\]

By conditions (c) and (d), almost surely $\tau_1<\infty$ and the correction terms with time index at least $\tau_1$ vanish. Hence
\[
H_r^{(R)}(h_j)\xrightarrow{R\to\infty}H_r(h_j)
\]
almost surely. Since $|h_j|\le1$ and the number of non-zero correction terms is at most $\tau_1$,
\[
|H_r^{(R)}(h_j)|
\le1+2\tau_1
\qquad
\text{for every }R .
\]
Condition (c) gives $\mathbb E[\tau_1]<\infty$, so the right-hand side is integrable. Applying dominated convergence,
\[
\mathbb E[H_r(h_j)]
=\lim_{R\to\infty}
\mathbb E[H_r^{(R)}(h_j)]
=\pi(h_j).
\]
Using the linearity of expectation over the averaging window $k:\ell$,
\begin{align*}
\mathbb E[H_{k:\ell}(h_j)]
&=\frac1{\ell-k+1}
\sum_{r=k}^{\ell}
\mathbb E[H_r(h_j)]\\
&=\pi(h_j).
\end{align*}
Finally,
\[
\pi(h_j)
=\sum_{\bm x\in\Omega}x_j\pi(\bm x)
=\Pr_\pi(Z_j=1).
\]

\par\smallskip\noindent\textbf{Finite variance.}
From the pathwise bound above and condition (c),
\[
\mathbb E[(1+2\tau_1)^2]<\infty .
\]
Hence $\mathbb E[H_r(h_j)^2]<\infty$ for every $r$. By Jensen's inequality for a finite average,
\[
H_{k:\ell}(h_j)^2
\le
\frac1{\ell-k+1}
\sum_{r=k}^{\ell}H_r(h_j)^2 .
\]
Taking expectations on both sides gives
\[
\mathbb E[H_{k:\ell}(h_j)^2]<\infty ,
\]
so the variance is finite.

\par\smallskip\noindent\textbf{Finite expected cost.}
The algorithm performs one update in the pre-lag phase. Each coupled update performs at most two single-site updates. The computation terminates once the averaging-window end $\ell$ has been reached and the lagged pair has met. Hence the total number of single-site updates is at most
\[
1+2\max\{\ell,\tau_1\} .
\]
Since condition (c) gives $\mathbb E[\tau_1]<\infty$, the expected computation time measured in these units is finite.
\end{proof}

\section{Fixed-Budget GPU Execution and Parallel Aggregation}

Consider an integer number $P\ge1$ of logical slots fixed before the outcomes are observed, together with a finite per-slot horizon $B\in(0,\infty)$. Each logical slot corresponds to one persistent CUDA thread. Fix a clone $j$ and set
\[
\widehat\theta_{p,r}
=H_{k:\ell}^{(p,r)}(h_j)
\]
for replication $r$ of slot $p$. A replication continues until both the averaging-window end and the meeting have been reached. Writing $\tau_{p,r}$ for the meeting time, the positive progress-unit completion cost of the one-lag estimator is
\[
C_{p,r}
=\max\{\ell,\tau_{p,r}\}.
\]
Here $B$ is the horizon that selects the replications to be included in the aggregate, and it is distinct from a finite meeting cap that would stop a pair before meeting. In this section no finite meeting cap is imposed, and every replication that is started is run through to meeting. The cumulative computation time and the number of completions up to the horizon $B$ are defined by
\[
S_p(0)=0,
\qquad
S_p(r)
=\sum_{u=1}^{r}C_{p,u},
\]
\[
N_p(B)
=\max\{r\ge0:S_p(r)\le B\}.
\]

Each slot starts its first replication and starts replication $r+1$ whenever $S_p(r)<B$. The replications with $S_p(r)\le B$ are retained. If $N_p(B)=0$, the first replication that completes beyond $B$ is retained. The second and later replications that complete beyond $B$ are excluded from the aggregate. Setting $N_p^*(B)=\max\{1,N_p(B)\}$, we define the slot mean $A_p(B)$ and the aggregate by
\[
\begin{gathered}
A_p(B)
=\frac1{N_p^*(B)}
\sum_{r=1}^{N_p^*(B)}
\widehat\theta_{p,r},\\
\widehat\theta_{\mathrm{CT}}(h_j)
=\frac1P
\sum_{p=1}^{P}A_p(B).
\end{gathered}
\]

\begin{assumption}[Parallel replication]
\label{supp:ass:parallel}
The observed data, the set of clones to be analyzed, the initial distribution $\nu$, the fixed weights $w$, and the averaging window $k:\ell$ are fixed. The logical slots $1,\ldots,P$ are fixed before the outcomes are observed. Each pair $(p,r)$ is assigned an i.i.d.\ ideal random stream, and each replication is run through to meeting with the same initial distribution and the same transition kernel. Then $(\widehat\theta_{p,r},C_{p,r})$ is i.i.d.\ within a slot and across slots. Dependence between the estimator and the computation time within the same pair is allowed.
\end{assumption}

Theorem~\ref{supp:thm:ct} states that the completion-time correction and the equal-weight average across slots preserve the expectation of the single-pair estimator.

\begin{theorem}[Equal-weight logical-slot aggregate]
\label{supp:thm:ct}
Under Assumption~\ref{supp:ass:finite} and Assumption~\ref{supp:ass:parallel}, for every clone $j$,
\[
\mathbb E[\widehat\theta_{\mathrm{CT}}(h_j)]
=\pi(h_j).
\]
\end{theorem}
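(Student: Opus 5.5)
The plan is to reduce the statement to a single slot. By linearity, $\mathbb E[\widehat\theta_{\mathrm{CT}}(h_j)]=\frac1P\sum_{p}\mathbb E[A_p(B)]$. Under Assumption~\ref{supp:ass:parallel} every slot has the same law, so it suffices to show $\mathbb E[A_p(B)]=\mathbb E[\widehat\theta_{p,1}]$ for one fixed slot $p$. Theorem~\ref{supp:thm:cmcpd} then gives $\mathbb E[\widehat\theta_{p,1}]=\mathbb E[H_{k:\ell}(h_j)]=\pi(h_j)$, and it also gives $\mathbb E|\widehat\theta_{p,1}|<\infty$ via the finite variance. This is the fixed-horizon argument of \citet[Proposition~3.2]{glynn1991analysis}, and I would reprove it directly in the present notation. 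From here on I drop the index $p$.

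\textbf{Step 1: $N(B)$ is almost surely finite.} Every cost satisfies $C_r=\max\{\ell,\tau_r\}\ge\tau_r\ge1$. Hence $S_r\ge r$, so $N(B)\le\lfloor B\rfloor$ surely. Consequently $\{N(B)=n\}$, $n=0,1,\ldots,\lfloor B\rfloor$, is a finite partition of the sample space. On $\{N(B)=0\}$ we have $A(B)=\widehat\theta_1$. On $\{N(B)=n\}$ with $n\ge1$ we have $A(B)=\frac1n\sum_{r=1}^n\widehat\theta_r$. Each piece is integrable because $|A(B)|\le\sum_{r\le\lfloor B\rfloor\vee1}|\widehat\theta_r|$. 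Therefore
\[
\mathbb E[A(B)]=\mathbb E[\widehat\theta_1\mathbf 1\{N(B)=0\}]+\sum_{n\ge1}\frac1n\sum_{r=1}^n\mathbb E[\widehat\theta_r\mathbf 1\{N(B)=n\}].
\]

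\textbf{Step 2: exchangeability, the key step.} For $n\ge1$,
\[
\{N(B)=n\}=\{C_1+\cdots+C_n\le B<C_1+\cdots+C_n+C_{n+1}\}.
\]
This event is a function of the multiset $\{C_1,\ldots,C_n\}$ together with $C_{n+1}$, and it is symmetric in the first $n$ indices. The pairs $(\widehat\theta_r,C_r)_{r\le n+1}$ are i.i.d. by Assumption~\ref{supp:ass:parallel}. So permuting the first $n$ pairs leaves the joint law of $\bigl((\widehat\theta_r,C_r)_{r\le n},C_{n+1}\bigr)$ unchanged. It follows that $\mathbb E[\widehat\theta_r\mathbf 1\{N(B)=n\}]=\mathbb E[\widehat\theta_1\mathbf 1\{N(B)=n\}]$ for every $r\le n$. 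Only the joint i.i.d. structure is used here; the within-pair dependence between $\widehat\theta_r$ and $C_r$ is allowed. The inner average in Step 1 therefore collapses to $\mathbb E[\widehat\theta_1\mathbf 1\{N(B)=n\}]$.

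\textbf{Step 3: summation.} Summing over the finite partition gives $\mathbb E[A(B)]=\mathbb E[\widehat\theta_1]$. The zero-completion correction is exactly what makes the $n=0$ cell contribute $\widehat\theta_1$ rather than nothing. Combined with Theorem~\ref{supp:thm:cmcpd} this yields $\mathbb E[A_p(B)]=\pi(h_j)$, and averaging over $p$ completes the proof.

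The main obstacle is Step 2. One must verify that the selection event depends on the retained replications only symmetrically. Two points matter: the dependence on $C_{n+1}$ does not break the symmetry among the first $n$ pairs, and the discarded later replications do not enter the event at all. This is also why equal weighting within slots works here, whereas the pooled estimator $\widehat\theta_{\mathrm{pool}}$ does not share this property.
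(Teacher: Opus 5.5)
Your proof is correct, but it is more self-contained than the paper's. The paper does no per-slot computation. It collects the hypotheses from Theorem~\ref{supp:thm:cmcpd}: unbiasedness and finite variance of $\widehat\theta_{p,r}$, $\mathbb E[C_{p,r}]\le\ell+\mathbb E[\tau_{p,r}]<\infty$, and $N_p(B)<\infty$ a.s.\ because $C_{p,r}$ is a positive integer. It then identifies $A_p(B)$ and $\widehat\theta_{\mathrm{CT}}(h_j)$ with $\overline X_i\{\widetilde N_i(t)\}$ and $\widetilde\mu_1(P,t)$, and invokes Proposition~3.2 of Glynn and Heidelberger as a black box.

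You instead reprove that proposition in the present notation. You partition on $\{N(B)=n\}$ and observe that $\{S_n\le B<S_n+C_{n+1}\}$ is symmetric in the first $n$ i.i.d.\ pairs. Hence $\mathbb E[\widehat\theta_r\mathbf 1\{N(B)=n\}]$ does not depend on $r\le n$, and the zero-completion cell supplies the remaining piece $\mathbb E[\widehat\theta_1\mathbf 1\{N(B)=0\}]$.

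Your version buys transparency:
\begin{itemize}
\item It shows why equal within-slot weighting and the convention $N_p^*(B)=\max\{1,N_p(B)\}$ are exactly what is needed, and why $\widehat\theta_{\mathrm{pool}}$ fails.
\item It checks the correspondence with the $\widetilde N$ convention of the cited work rather than asserting it.
\item It shows that unbiasedness uses only $\mathbb E|\widehat\theta_{p,1}|<\infty$ and the joint i.i.d.\ structure. With your sure bound $N(B)\le\lfloor B\rfloor$, which follows from $C_r\ge\tau_r\ge1$, the finite expected cost that the paper verifies plays no role in this statement.
\end{itemize}
The paper's version buys brevity and anchors the result in an established reference. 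The price is that the reader must trust the matching of assumptions and notation.
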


\begin{proof}
Fix a clone $j$. Theorem~\ref{supp:thm:cmcpd} gives
\[
\mathbb E[\widehat\theta_{p,r}]
=\pi(h_j),
\qquad
\operatorname{Var}(\widehat\theta_{p,r})<\infty ,
\]
and the finite variance yields $\mathbb E[|\widehat\theta_{p,r}|]<\infty$. Moreover, the geometric tail of the meeting time in Theorem~\ref{supp:thm:cmcpd} together with $C_{p,r}=\max\{\ell,\tau_{p,r}\}$ gives
\[
\mathbb E[C_{p,r}]
\le \ell+\mathbb E[\tau_{p,r}]
<\infty ,
\]
and $0<C_{p,r}<\infty$ almost surely. Furthermore, since $C_{p,r}$ is a positive integer, $N_p(B)<\infty$ holds almost surely for a fixed finite $B$. By Assumption~\ref{supp:ass:parallel}, $(\widehat\theta_{p,r},C_{p,r})$ is i.i.d.\ within a slot and across slots.

Let the processor index, the replication index, the output, the runtime, the horizon, and the target mean of \citet[Sections~2--3]{glynn1991analysis} correspond, respectively, to
\[
p,\quad
r,\quad
\widehat\theta_{p,r},\quad
C_{p,r},\quad
B,\quad
\pi(h_j),
\]
and let
\[
A_p(B)
=\overline X_i\{\widetilde N_i(t)\},
\qquad
\widehat\theta_{\mathrm{CT}}(h_j)
=\widetilde\mu_1(P,t).
\]
The assumptions of \citet[Equation~(3.2), Proposition~3.2]{glynn1991analysis} therefore hold, and that proposition gives
\[
\mathbb E[\widehat\theta_{\mathrm{CT}}(h_j)]
=\pi(h_j).
\]
\end{proof}

From the same coupled pair we compute
\[
\widehat{\bm\theta}_{p,r}
=
\bigl(
H_{k:\ell}^{(p,r)}(h_1),
\ldots,
H_{k:\ell}^{(p,r)}(h_n)
\bigr)^\mathsf{T}
\]
and aggregate it as
\[
\widehat{\bm p}_{\mathrm{CT}}
=
\frac1P\sum_{p=1}^P
\frac1{N_p^*(B)}
\sum_{r=1}^{N_p^*(B)}
\widehat{\bm\theta}_{p,r}.
\]
Because all components use the common $C_{p,r}$ and $N_p(B)$, applying Theorem~\ref{supp:thm:ct} to each component gives
\[
\mathbb E[\widehat{\bm p}_{\mathrm{CT}}]
=
\bigl(\pi(h_1),\ldots,\pi(h_n)\bigr)^\mathsf{T}.
\]

\section{Details of the Experimental Setup}

The experiments used a Knill-type random pooling design with 1,298 clones, 97 pools, and 3 true positives. Each clone belongs to 3 pools. We set the prior probability to $q=2.6/1298=0.002003$, giving 2.6 unpruned prior expected positives. After naive-score pruning, 333 clones were analyzed. Setting the false-negative rate and the false-positive rate to either 0.05 or 0.10 produced four noise conditions. For Tapestry 320 and Tapestry 961, we used $q=5/320=0.015625$ and $q=10/961=0.010406$, respectively, and retained these clone-level prior probabilities after pruning. One set of observed pool data was fixed for each condition, and each cell was repeated with 10 chain seeds.

Let $n_a$ denote the number of analyzed clones and $s\in\{1,3,5,10,20\}$ the total estimator budget in sweeps. We set $M_s=s n_a$ single-site updates and $b_s=\operatorname{round}(0.1M_s)$. The uncoupled estimators discarded the first $b_s$ updates and averaged the following $M_s-b_s$ updates. The coupled estimators used $k=b_s+1$ and $\ell=M_s$, giving $\ell-k+1=M_s-b_s$ terms in the base average. Table~\ref{supp:tab:update-counts} gives the exact counts.

\begin{table*}[t]
\centering
\caption{Exact burn-in and averaging-update counts.}
\label{supp:tab:update-counts}
\small
\begin{tabular}{lll}
\toprule
Dataset & $b_s$ for $s=1,3,5,10,20$ & $M_s-b_s$ \\
\midrule
Synthetic, $n_a=333$ & 33, 100, 166, 333, 666 & 300, 899, 1499, 2997, 5994 \\
Tapestry 320, $n_a=320$ & 32, 96, 160, 320, 640 & 288, 864, 1440, 2880, 5760 \\
Tapestry 961, $n_a=247$ & 25, 74, 124, 247, 494 & 222, 667, 1111, 2223, 4446 \\
\bottomrule
\end{tabular}

\end{table*}

The uncoupled methods used 2,000,000 logical chain slots and the coupled methods 1,000,000 logical slots. Since each coupled logical slot holds two chain states, the number of chain states held simultaneously is 2,000,000 in both cases. The per-slot progress-unit horizon was fixed at 20 sweeps in every cell. For an estimator length of $s$ sweeps, the number of chains of the uncoupled methods, in units of single-site updates, is
\[
N_s
=\left\lfloor
\frac{2{,}000{,}000\times20\times333}{s\times333}
\right\rfloor
\approx\frac{40{,}000{,}000}{s}.
\]
When slot refill is enabled, a slot that has completed one pair starts the next pair within the 20-sweep horizon. The five conditions refer to estimator lengths of 1, 3, 5, 10, and 20 sweeps, and the per-slot progress-unit horizon is 20 sweeps in all conditions. The reported coupled runs also impose a 20-sweep meeting cap. They are finite-cap approximations to the uncapped estimators covered by the unbiasedness theorems.

We audited cap-hit rates at the coupled-pair level. The denominator is the total number of coupled pairs attempted in a method--condition cell, and the numerator is the number of pairs that failed to meet before the 20-sweep cap. In the maximum-noise synthetic condition, both coupled random update and coupled score-weighted update had 0\% cap-hit pairs. For coupled random update, the three remaining synthetic conditions FN $=$ FP $=0.05$, FN $=0.05$/FP $=0.10$, and FN $=0.10$/FP $=0.05$ had rates of $2.84\times10^{-3}\%$, $2.27\times10^{-4}\%$, and $1.19\times10^{-5}\%$, respectively; Tapestry 320 and Tapestry 961 had rates of 0.486\% and 10.257\%. For coupled score-weighted update, the corresponding rates were $2.44\times10^{-6}\%$, $3.04\times10^{-7}\%$, $1.52\times10^{-6}\%$, 0.00259\%, and 2.869\%.

\section{Additional Experimental Results}

The main text focused on the maximum noise condition. Here we show the top-10 comparison across the four noise conditions, MAE by estimator length, the slot refill ablation, and top-10 MAE by estimator length for the two real-data conditions.

\subsection{Reference Top-10 Comparison}

Table~\ref{supp:tab:appendix-top10-all-conditions} shows the four noise conditions with the same aggregation method as Table~1 of the main text. Within each execution the 5 estimator lengths are averaged first, and the mean and the standard error are then taken over the 10 executions.

\begin{table*}[p]
\centering
\caption{Reference top-10 comparison across four noise conditions.}
\label{supp:tab:appendix-top10-all-conditions}
\small
\begin{tabular}{lcc}
\toprule
\multicolumn{3}{c}{FN $=0.05$, FP $=0.05$} \\
\midrule
Method & Mean top-10 MAE & Mean top-10 overlap \\
\midrule
systematic scan & $(1.186{\pm}0.000){\times}10^{-2}$ & $0.828{\pm}0.003$ \\
random update & $(6.941{\pm}0.001){\times}10^{-2}$ & $\mathbf{0.920{\pm}0.000}$ \\
coupled random update & $(1.881{\pm}0.088){\times}10^{-2}$ & $0.826{\pm}0.007$ \\
score-weighted update & $(4.339{\pm}0.002){\times}10^{-3}$ & $0.860{\pm}0.000$ \\
coupled score-weighted update & $\mathbf{(2.261{\pm}0.111){\times}10^{-3}}$ & $0.862{\pm}0.007$ \\
\bottomrule
\end{tabular}

\vspace{0.8em}

\begin{tabular}{lcc}
\toprule
\multicolumn{3}{c}{FN $=0.05$, FP $=0.10$} \\
\midrule
Method & Mean top-10 MAE & Mean top-10 overlap \\
\midrule
systematic scan & $(2.487{\pm}0.000){\times}10^{-2}$ & $0.800{\pm}0.000$ \\
random update & $(2.381{\pm}0.000){\times}10^{-2}$ & $\mathbf{0.976{\pm}0.005}$ \\
coupled random update & $(1.000{\pm}0.048){\times}10^{-2}$ & $0.762{\pm}0.009$ \\
score-weighted update & $(3.364{\pm}0.001){\times}10^{-3}$ & $0.788{\pm}0.003$ \\
coupled score-weighted update & $\mathbf{(1.108{\pm}0.029){\times}10^{-3}}$ & $0.802{\pm}0.009$ \\
\bottomrule
\end{tabular}

\vspace{0.8em}

\begin{tabular}{lcc}
\toprule
\multicolumn{3}{c}{FN $=0.10$, FP $=0.05$} \\
\midrule
Method & Mean top-10 MAE & Mean top-10 overlap \\
\midrule
systematic scan & $(7.287{\pm}0.002){\times}10^{-3}$ & $0.840{\pm}0.000$ \\
random update & $(2.957{\pm}0.000){\times}10^{-2}$ & $0.930{\pm}0.003$ \\
coupled random update & $(2.669{\pm}0.166){\times}10^{-3}$ & $0.928{\pm}0.010$ \\
score-weighted update & $(4.659{\pm}0.001){\times}10^{-3}$ & $0.936{\pm}0.003$ \\
coupled score-weighted update & $\mathbf{(1.112{\pm}0.030){\times}10^{-3}}$ & $\mathbf{0.946{\pm}0.007}$ \\
\bottomrule
\end{tabular}

\vspace{0.8em}

\begin{tabular}{lcc}
\toprule
\multicolumn{3}{c}{FN $=0.10$, FP $=0.10$} \\
\midrule
Method & Mean top-10 MAE & Mean top-10 overlap \\
\midrule
systematic scan & $(1.630{\pm}0.001){\times}10^{-3}$ & $0.800{\pm}0.000$ \\
random update & $(1.350{\pm}0.000){\times}10^{-2}$ & $\mathbf{0.998{\pm}0.002}$ \\
coupled random update & $(1.248{\pm}0.093){\times}10^{-3}$ & $0.844{\pm}0.010$ \\
score-weighted update & $(3.590{\pm}0.001){\times}10^{-3}$ & $0.654{\pm}0.003$ \\
coupled score-weighted update & $\mathbf{(5.109{\pm}0.303){\times}10^{-4}}$ & $0.928{\pm}0.008$ \\
\bottomrule
\end{tabular}

\end{table*}

\subsection{MAE by Estimator Length}

Figure~\ref{supp:fig:appendix-sweep-mae-all-conditions} shows the all-clone MAE and the top-10 MAE across the four noise conditions. Each row corresponds to a noise condition; the left column refers to the 333 clones after pruning and the right column to the reference top-10 clones.

\begin{figure*}[p]
\centering
\includegraphics[width=\linewidth,height=0.86\textheight,keepaspectratio]{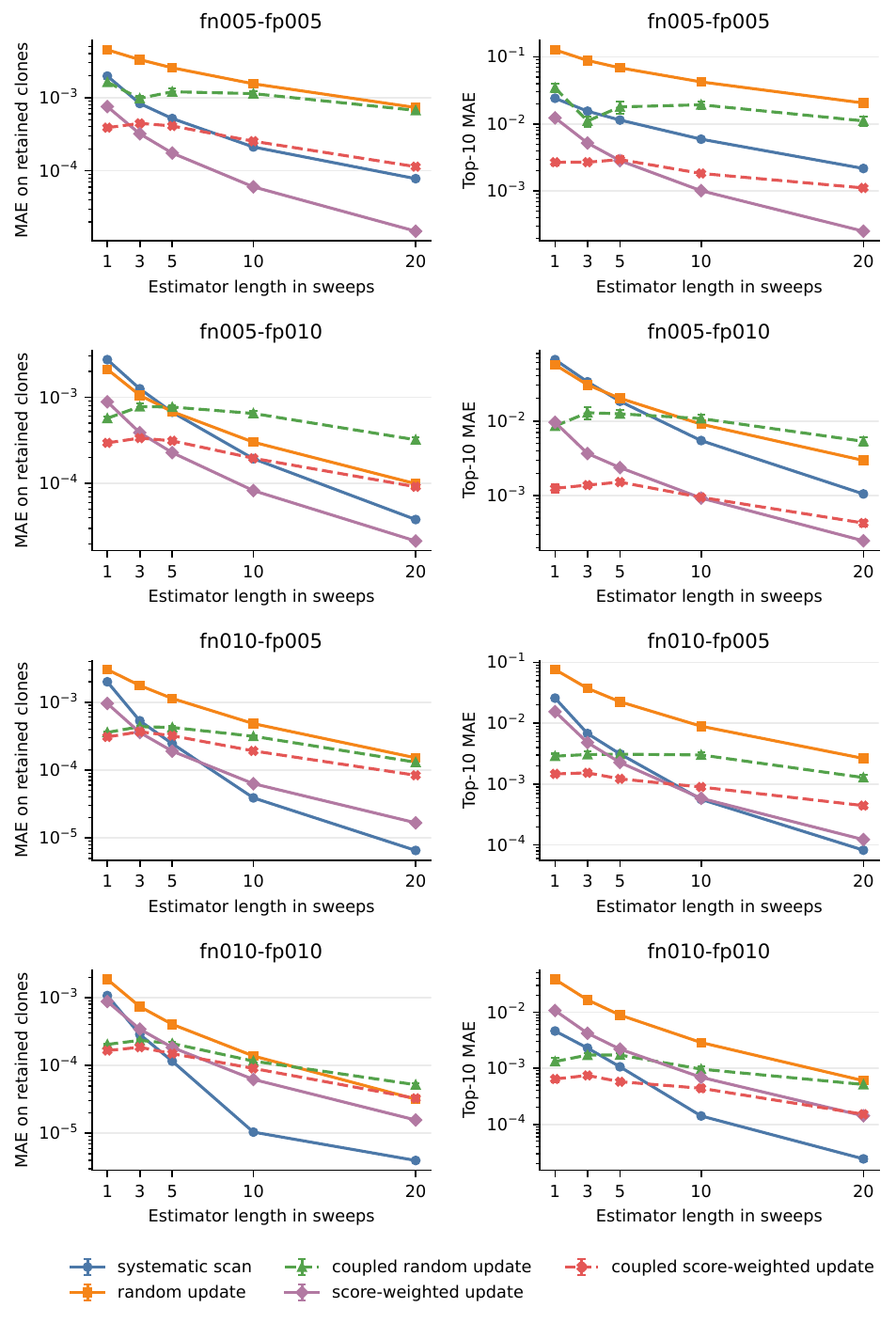}
\caption{MAE by estimator length across four noise conditions.}
\label{supp:fig:appendix-sweep-mae-all-conditions}
\end{figure*}

\subsection{Slot Refill Ablation}

Figure~\ref{supp:fig:appendix-refill-all-conditions} compares coupled score-weighted update with and without slot refill across the four noise conditions. The vertical axis is the all-clone MAE for the 333 clones after pruning.

\begin{figure*}[t]
\centering
\includegraphics[width=\linewidth]{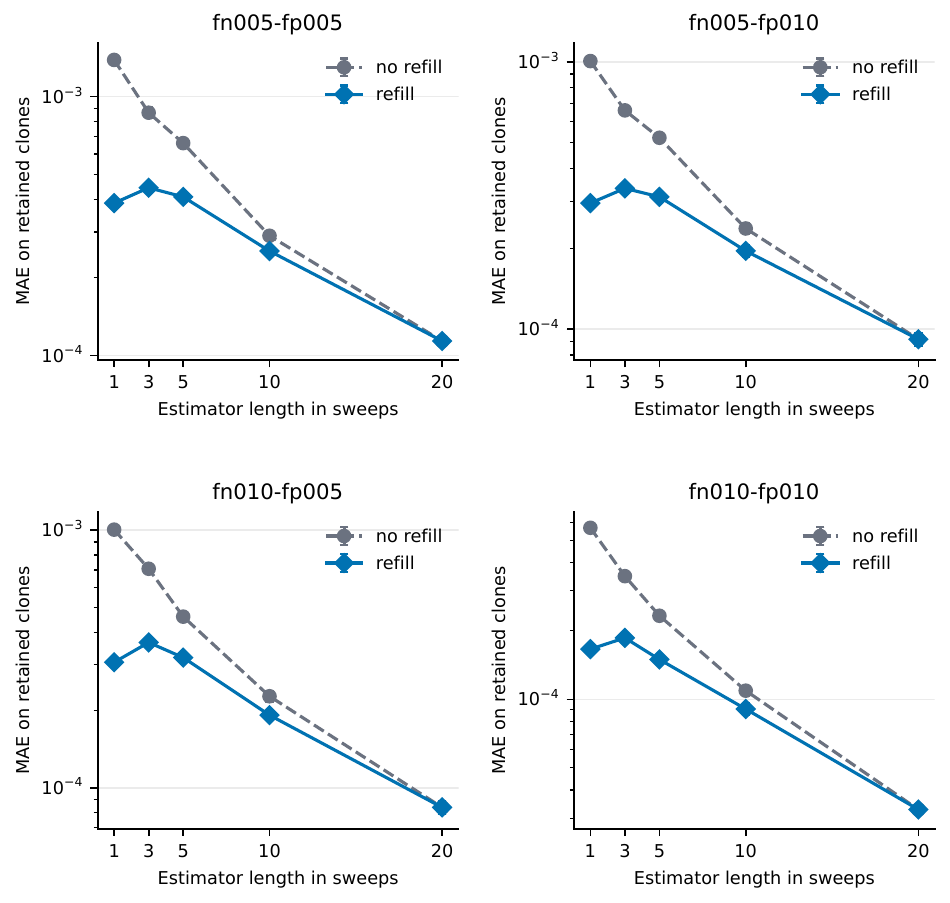}
\caption{Slot refill ablation across four noise conditions.}
\label{supp:fig:appendix-refill-all-conditions}
\end{figure*}

\subsection{MAE by Estimator Length on the Real Data}

Figure~\ref{supp:fig:appendix-real-data-sweep-mae} shows the top-10 MAE by estimator length on Tapestry 320 and Tapestry 961. Each point is the average over 10 executions, and the error bars are standard errors.

\begin{figure*}[t]
\centering
\includegraphics[width=\linewidth]{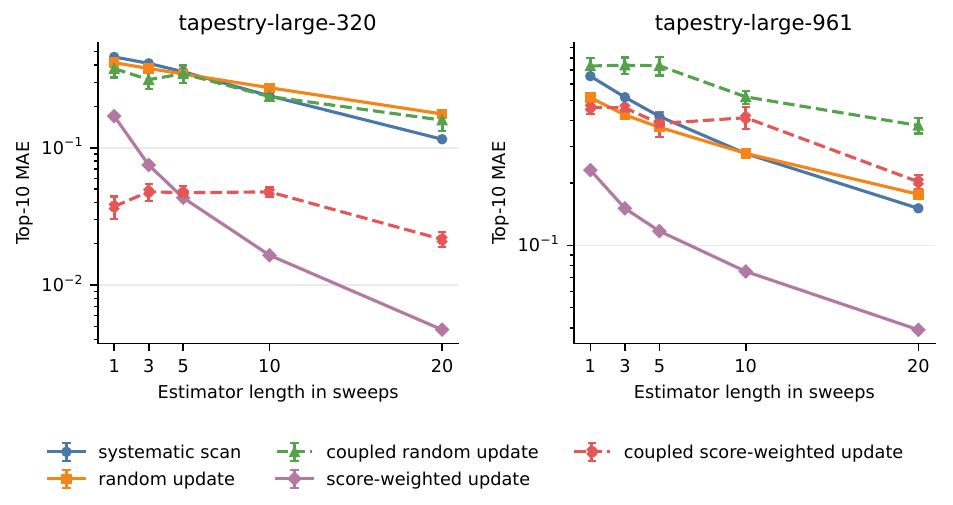}
\caption{Top-10 MAE by estimator length on Tapestry real data.}
\label{supp:fig:appendix-real-data-sweep-mae}
\end{figure*}

\section{Post Hoc Statistical Analysis}

We compared top-10 MAE within each condition and sweep using a two-sided exact independent-sample permutation test with the difference in means as the statistic. We treated methods as independent samples because their method-specific seed streams differ. For the refill ablation, the refill and no-refill variants share the same seed stream, so we used a two-sided exact paired sign-flip test on retained-clone MAE. We applied Holm correction separately to the 40 synthetic method comparisons, the 20 real-data method comparisons, and the 20 refill comparisons.

Table~\ref{supp:tab:statistical-test-summary} reports the Holm-adjusted results at level 0.05 in a single table with rows for each dataset or synthetic noise condition and estimator length. The entries lower and higher refer to the first-named method or variant in the column header, and n.s.\ denotes a non-significant result.

\begin{table*}[t]
\centering
\caption{Dataset-wise and sweep-wise summary of post hoc exact tests. Entries use Holm-adjusted $p\le0.05$; n.s.\ denotes a non-significant result.}
\label{supp:tab:statistical-test-summary}
\scriptsize
\begin{tabular}{lrccc}
\toprule
Data & Sweep & cSW vs. cRU & cSW vs. SW & refill vs. no refill \\
\midrule
FN=FP=0.05 & 1 & lower & lower & lower \\
FN=FP=0.05 & 3 & lower & lower & lower \\
FN=FP=0.05 & 5 & lower & n.s. & lower \\
FN=FP=0.05 & 10 & lower & higher & lower \\
FN=FP=0.05 & 20 & lower & higher & n.s. \\
FN=0.05/FP=0.10 & 1 & lower & lower & lower \\
FN=0.05/FP=0.10 & 3 & lower & lower & lower \\
FN=0.05/FP=0.10 & 5 & lower & lower & lower \\
FN=0.05/FP=0.10 & 10 & lower & n.s. & lower \\
FN=0.05/FP=0.10 & 20 & lower & higher & n.s. \\
FN=0.10/FP=0.05 & 1 & lower & lower & lower \\
FN=0.10/FP=0.05 & 3 & lower & lower & lower \\
FN=0.10/FP=0.05 & 5 & lower & lower & lower \\
FN=0.10/FP=0.05 & 10 & lower & higher & lower \\
FN=0.10/FP=0.05 & 20 & lower & higher & n.s. \\
FN=FP=0.10 & 1 & lower & lower & lower \\
FN=FP=0.10 & 3 & lower & lower & lower \\
FN=FP=0.10 & 5 & lower & lower & lower \\
FN=FP=0.10 & 10 & lower & lower & lower \\
FN=FP=0.10 & 20 & lower & n.s. & n.s. \\
Tapestry 320 & 1 & lower & lower & -- \\
Tapestry 320 & 3 & lower & lower & -- \\
Tapestry 320 & 5 & lower & n.s. & -- \\
Tapestry 320 & 10 & lower & higher & -- \\
Tapestry 320 & 20 & lower & higher & -- \\
Tapestry 961 & 1 & lower & higher & -- \\
Tapestry 961 & 3 & lower & higher & -- \\
Tapestry 961 & 5 & lower & higher & -- \\
Tapestry 961 & 10 & n.s. & higher & -- \\
Tapestry 961 & 20 & lower & higher & -- \\
\bottomrule
\end{tabular}

\end{table*}

These post hoc tests quantify chain-seed variation conditional on each fixed pooling design, latent state, and observed dataset. They do not provide population-level inference over pooling designs or datasets.

\end{document}